\def\UseBibLatex{1}%
\providecommand{\SoCGVer}[1]{}%
\providecommand{\NotSoCGVer}[1]{#1}%
\newcommand{\IfPrinterVer}[2]{#2}%
\titleformat{\paragraph}[runin]
{\normalfont\bfseries}
{\theparagraph}
{1em}
{\addperiod}
\newcommand{\addperiod}[1]{#1.}
\newlength{\savedparindent}
\newcommand{\RestoreIndent}{\setlength{\parindent}{\savedparindent}}
\newcommand{\Notes}[1]{}
\newcommand{\Notes}[1]{%
   \noindent
   \begin{leftbar}%
       \noindent%
       \hspace*{-0.05cm}%
       {\textbf{\Large{\underline{Notes}:}}}
       
       \RestoreIndent  %
       #1
   \end{leftbar}
   \medskip%
}
   \theoremstyle{plain}%
   \newtheorem{fact}[theorem]{Fact}
   \newtheorem{invariant}[theorem]{Invariant}
   \newtheorem{question}[theorem]{Question}
   \newtheorem{prop}[theorem]{Proposition}
   \newtheorem{openproblem}[theorem]{Open Problem}
   \theoremstyle{plain}%
   \newtheorem{defn}[theorem]{Definition}
   \newtheorem{problem}[theorem]{Problem}
   \newtheorem{xca}[theorem]{Exercise}
   \newtheorem{exercise_h}[theorem]{Exercise}
   \newtheorem{assumption}[theorem]{Assumption}%
   \newtheorem{proofof}{Proof of\!}%
\theoremstyle{plain}%
\newtheorem{theorem}{Theorem}[section]
\newtheorem{lemma}[theorem]{Lemma}
\theoremstyle{plain}%
\newtheorem*{remark:unnumbered}[theorem]{Remark}%
\newtheorem*{remarks}[theorem]{Remarks}%
\newtheorem{remark}[theorem]{Remark}%
\newtheorem{defn}[theorem]{Definition}
\newcommand{\myqedsymbol}{\rule{2mm}{2mm}}
\theoremstyle{nonumberplain}%
\newtheorem{proof}{Proof:}%
   \theoremstyle{plain}%
   \newtheorem{remark:unnumbered}[theorem]{Remark}%
\definecolor{blue25emph}{rgb}{0, 0, 11}
\providecommand{\emphic}[2]{%
   \textcolor{blue25emph}{%
      \textbf{\emph{#1}}}%
   \index{#2}}
\providecommand{\emphi}[1]{\emphic{#1}{#1}}
\definecolor{almostblack}{rgb}{0, 0, 0.3}
\providecommand{\emphw}[1]{{\textcolor{almostblack}{\emph{#1}}}}%
\newcommand{\atgen}{\symbol{'100}}
\newcommand{\SarielThanks}[1]{\thanks{Department of Computer Science;
      University of Illinois; 201 N. Goodwin Avenue; Urbana, IL,
      61801, USA; {\tt sariel\atgen{}illinois.edu}; {\tt
         \url{http://sarielhp.org/}.} #1}}
\newcommand{\HLink}[2]{\hyperref[#2]{#1~\ref*{#2}}}
\newcommand{\HLinkSuffix}[3]{\hyperref[#2]{#1\ref*{#2}{#3}}}
\newcommand{\figlab}[1]{\label{fig:#1}}
\newcommand{\figref}[1]{\HLink{Figure}{fig:#1}}
\newcommand{\thmlab}[1]{{\label{theo:#1}}}
\newcommand{\thmref}[1]{\HLink{Theorem}{theo:#1}}
\newcommand{\seclab}[1]{\label{sec:#1}}
\newcommand{\secref}[1]{\HLink{Section}{sec:#1}}
\newcommand{\remlab}[1]{\label{rem:#1}}
\newcommand{\remref}[1]{\HLink{Remark}{rem:#1}}%
\newcommand{\tbllab}[1]{\label{table:#1}}
\newcommand{\tblref}[1]{\HLink{Table}{table:#1}}
\newcommand{\lemlab}[1]{\label{lemma:#1}}
\newcommand{\lemref}[1]{\HLink{Lemma}{lemma:#1}}%
\newcommand{\deflab}[1]{\label{def:#1}}
\providecommand{\eqlab}[1]{}%
\renewcommand{\eqlab}[1]{\label{equation:#1}}
\newcommand{\remove}[1]{}%
\newcommand{\Set}[2]{\left\{ #1 \;\middle\vert\; #2 \right\}}
\newcommand{\pth}[2][\!]{\mleft({#2}\mright)}%
\newcommand{\pbrcx}[1]{\left[ {#1} \right]}%
\newcommand{\Prob}[1]{\mathop{\mathbb{P}}\!\pbrcx{#1}}
\newcommand{\ExChar}{\mathbb{E}}%
\newcommand{\Ex}[2][\!]{\mathop{\ExChar}#1\pbrcx{#2}}
\newcommand{\LS}{\Mh{\mathcal{L}}}%
\newcommand{\DS}{\Mh{\mathcal{D}}}%
\newcommand{\DSA}{\Mh{\mathcal{H}}}%
\newcommand{\ceil}[1]{\left\lceil {#1} \right\rceil}
\newcommand{\cardin}[1]{\left| {#1} \right|}%
\renewcommand{\th}{th\xspace}
\renewcommand{\Re}{\mathbb{R}}%
\DeclareFontFamily{U}{BOONDOX-calo}{\skewchar\font=45 }
\DeclareFontShape{U}{BOONDOX-calo}{m}{n}{
  <-> s*[1.05] BOONDOX-r-calo}{}
\DeclareFontShape{U}{BOONDOX-calo}{b}{n}{
  <-> s*[1.05] BOONDOX-b-calo}{}
\DeclareMathAlphabet{\mathcalb}{U}{BOONDOX-calo}{m}{n}
\SetMathAlphabet{\mathcalb}{bold}{U}{BOONDOX-calo}{b}{n}
\DeclareMathAlphabet{\mathbcalb}{U}{BOONDOX-calo}{b}{n}
\providecommand{\BibLatexMode}[1]{}
\providecommand{\BibTexMode}[1]{#1}
  \renewcommand{\BibLatexMode}[1]{}
  \renewcommand{\BibTexMode}[1]{#1}
  \renewcommand{\BibLatexMode}[1]{#1}
  \renewcommand{\BibTexMode}[1]{}
\newcommand{\UsePackage}[1]{%
  \IfFileExists{../styles/#1.sty}{%
      \usepackage{../styles/#1}%
   }{%
      \IfFileExists{./styles/#1.sty}{%
         \usepackage{styles/#1}%
      }{%
         \usepackage{#1}%
      }%
   }%
}
\newlist{compactenumA}{enumerate}{5}%
\setlist[compactenumA]{topsep=0pt,itemsep=-1ex,partopsep=1ex,parsep=1ex,%
   label=(\Alph*)}%
\newlist{compactenuma}{enumerate}{5}%
\setlist[compactenuma]{topsep=0pt,itemsep=-1ex,partopsep=1ex,parsep=1ex,%
   label=(\alph*)}%
\newlist{compactenumI}{enumerate}{5}%
\setlist[compactenumI]{topsep=0pt,itemsep=-1ex,partopsep=1ex,parsep=1ex,%
   label=(\Roman*)}%
\newlist{compactenumi}{enumerate}{5}%
\setlist[compactenumi]{topsep=0pt,itemsep=-1ex,partopsep=1ex,parsep=1ex,%
   label=(\roman*)}%
\newlist{compactitem}{itemize}{5}%
\setlist[compactitem]{topsep=0pt,itemsep=-1ex,partopsep=1ex,parsep=1ex,%
   label=\ensuremath{\bullet}}%
\numberwithin{figure}{section}%
\numberwithin{table}{section}%
\numberwithin{equation}{section}%
\providecommand{\Mh}[1]{#1}%
\providecommand{\G}{\Mh{G}}%
\renewcommand{\G}{\Mh{G}}%
\newcommand{\GA}{\Mh{H}}%
\newcommand{\cC}{\Mh{\mathsf{c}_3}}%
\newcommand{\CM}{\Mh{\mathcal{C}}}%
\newcommand{\CMB}{\Mh{\mathcal{B}}}%
\newcommand{\CMC}{\Mh{\mathcal{W}}}%
\newcommand{\kopt}{\Mh{\mathcalb{m}^\star}}%
\newcommand{\koptX}[1]{\Mh{\mathcalb{m}}^\Mh{\star}\pth{#1}}%
\newcommand{\koptiX}[1]{\Mh{\mathcalb{m}}^\Mh{\star}_{#1}}%
\newcommand{\MOpt}{\Mh{\mathcal{M}}^\Mh{\star}}%
\newcommand{\MOptX}[1]{\MOpt_{#1}}%
\newcommand{\VOpt}{\VV^\Mh{\star}}
\newcommand{\eps}{{\varepsilon}}%
\newcommand{\VFree}{\Mh{F}}%
\providecommand{\p}{}
\renewcommand{\p}{\Mh{p}}%
\newcommand{\PS}{\Mh{P}}%
\newcommand{\diamX}[1]{\text{diam}\pth{#1}}%
\newcommand{\OSet}{\Mh{\mathsf{U}}}%
\newcommand{\obj}{\Mh{o}}
\newcommand{\etal}{\textit{et~al.}\xspace}
\newcommand{\VV}{\Mh{\mathsf{V}}}%
\newcommand{\VX}[1]{\VV\pth{#1}}%
\newcommand{\EG}{\Mh{\mathsf{E}}}%
\newcommand{\EGX}[1]{\Mh{E}\pth{#1}}
\newcommand{\DFS} {\TermI{DFS}\xspace}
\newcommand{\Term}[1]{\textsf{#1}}
\newcommand{\TermI}[1]{\Term{#1}\index{#1@\Term{#1}}}
\newcommand{\MXor}{\Mh{\mathcal{X}}}%
\newcommand{\densityOp}{\Mh{\mathop{\mathrm{density}}}}%
\newcommand{\densityX}[1]{\densityOp\pth{#1}}%
\newcommand{\cDensity}{\Mh{\rho}} %
\newcommand{\sphereC}{{\mathbb{{S}}}}%
\newcommand{\SetA}{\Mh{X}}%
\newcommand{\SetB}{\Mh{Y}}%
\newcommand{\SetC}{\Mh{U}}
\newcommand{\SepSet}{\Mh{Z}}%
\newcommand{\IGraphC}{\Mh{\mathcal{I}}}%
\newcommand{\IGraphX}[1]{\IGraphC_{#1}}
\newcommand{\GInduced}[1]{\G_{|{#1}}}
\newcommand{\objA}{\Mh{g}}%
\newcommand{\cSep}{\Mh{\zeta}}%
\newcommand{\Entity}{\mathcal{H}}
\newcommand{\wX}[1]{\Mh{\overline{\mathsf{w}}}\pth{#1}}%
\newcommand{\epsA}{\Mh{\xi}}%
\newcommand{\BadProb}{\Mh{\gamma}}%
\newcommand{\Llen}{\Mh{L_{\mathrm{len}}}}%
\newcommand{\disk}{\Mh{\Circle}}%
\newcommand{\listX}[1]{\Mh{\ell}\pth{#1}}%
\newcommand{\cell}{\Box}
\newcommand{\Arr}{\Mh{\EuScript{A}}}
\newcommand{\ArrX}[1]{\Arr\pth{#1}}%
\newcommand{\VCM}{\VV_\CM}
\newcommand{\nCM}{\Mh{n}_\CM}
\newcommand{\ranges}{\Mh{\mathcal{R}}}%
\newcommand{\spread}{\Mh{\Phi}}
\newcommand{\cEps}{c_\eps}%
\newcommand{\AGC}{\Mh{\mathcal{B}}}%
\providecommand{\TPDF}[2]{\texorpdfstring{#1}{#2}}
\newlength{\arxivwidth}
\newcommand{\arXiv}{%
   \begin{minipage}{\arxivwidth}
       \vspace*{0.1\fontcharht\font`A}%
       \includegraphics[width=\arxivwidth]{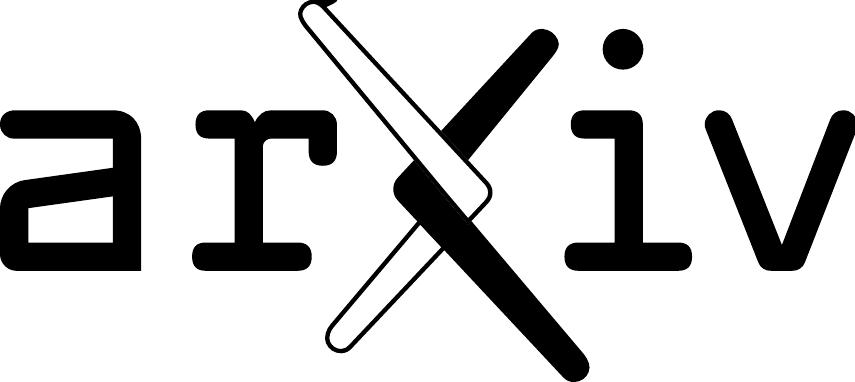}
   \end{minipage}\xspace
}
\title{Approximation Algorithms for Maximum Matchings in Geometric
   Intersection Graphs}%
   \author{%
   Sariel Har-Peled\SarielThanks{}%
   \and%
   Everett Yang}
   \author{Sariel Har-Peled}%
   {Department of Computer Science, University of Illinois, 201
      N. Goodwin Avenue, Urbana, IL 61801, USA}%
   {sariel@illinois.edu}%
   {https://orcid.org/0000-0003-2638-9635}%
   {Work on this paper was partially supported by a NSF AF award
      CCF-1907400.}%
   \author{Everett Yang}%
   {Department of Computer Science, University of Illinois, 201
      N. Goodwin Avenue, Urbana, IL 61801, USA}%
   {esyang@illinois.edu}%
   {https://orcid.org/}%
   {}%
   \authorrunning{S. Har-Peled and E. Yang} %
   \keywords{Matchings, disk intersection graphs, approximation
      algorithms }%
\begin{document}
\maketitle

\begin{abstract}
    We present a $(1- \eps)$-approximation algorithms for maximum
    cardinality matchings in disk intersection graphs -- all with near
    linear running time. We also present an estimation algorithm that
    returns $(1\pm \eps)$-approximation to the size of such matchings
    -- this algorithm runs in linear time for unit disks, and
    $O(n \log n)$ for general disks (as long as the density is
    relatively small).
\end{abstract}

\section{Introduction}

\paragraph*{Geometric intersection graphs}
Given a set of $n$ objects $\OSet$, its intersection graph,
$\IGraphX{\OSet}$, is the graph where the vertices correspond to
objects in $\OSet$ and there is an edge between two vertices if their
corresponding objects intersect. Such graphs can be dense (i.e., have
$\Theta(n^2)$ edges), but they have a linear size representation. It
is natural to ask if one can solve problems on such graphs more
efficiently than explicitly represented graphs.

\paragraph*{Maximum matchings}
Computing maximum cardinality matchings is one of the classical
problems on graphs (surprisingly, the algorithm to solve the bipartite
case goes back to work by Jacobi in the mid 19\th century). The
fastest combinatorial algorithm (ignoring polylog factors) seems to be
the work by Gabow and Tarjan \cite{gt-fsagg-91}, running in
$O( m \sqrt{n})$ time where $m$ is the number of edges in the graph.
Harvey \cite{h-aamm-09} and Mucha and Sankowski \cite{ms-mmge-04}
provided algorithms based on algebraic approach that runs in
$O(n^{\omega})$ time, where $O(n^{\omega})$ is the fastest time known
for multiplying two $n\times n$ matrices. Currently, the fastest known
algorithm for matrix multiplication has $\omega \approx 2.3728596$,
but it is far from being practical.

\paragraph*{Matchings for planar graphs and disk intersection graphs}
Mucha and Sankowski \cite{ms-mmpgg-06} adapted their algebraic
technique for planar graphs (specifically using separators), getting
running time $O(n^{\omega/2})\approx O(n^{1.17})$.  Yuster and Zwick
\cite{yz-mmge-07} adapted this algorithm for graphs with excluded
minors.

\paragraph*{Maximum matchings in geometric intersection graphs}

Bonnet \etal \cite{bcm-mmgig-20} studied the problem for geometric
intersection graphs. For simplicity of exposition, we describe their
results in the context of disk intersection graphs. Given a set $n$
disks with maximum density $\cDensity$ (i.e., roughly the maximum
number of disks covering a point in the plane), they presented an
algorithm for computing maximum matchings with running time
$O( \cDensity^{3\omega/2} n^{\omega/2}) \approx O(\cDensity^{3.5}
n^{1.17})$. This compares favorably with the naive algorithm of just
plugging such graphs into the algorithm of Gabow and Tarjan, which
yields running time $O( m\sqrt{n} ) = O( \cDensity n^{3/2})$.  If the
ratio between the smallest disk and largest disk is at most $\spread$,
they presented an algorithm with running time
$O(\spread^{12\omega}n^{\omega/2})$. Note that the running time of all
these algorithms is super linear in $n$.

\paragraph*{Approximate maximum matchings}

It is well known that it is enough to augment along paths of length up
to $O(1/\eps)$ if one wants $(1-\eps)$-approximate matchings. For
bipartite graphs this implies that one need to run $O(1/\eps)$ rounds
of paths finding stage of the bipartite matching algorithm of Hopcroft
and Karp \cite{hk-nammb-73}. Since such a round takes $O(m)$ time,
this readily leads to an $(1-\eps)$-approximate bipartite matching
algorithm in this case. The non-bipartite case is significantly more
complicated, and the weighted case is even more
difficult. Nevertheless, Duan and Pettie \cite{dp-ltamw-14} presented
an algorithm with running time $O(m\eps^{-1} \log{\eps^{-1}})$ which
provides $(1-\eps)$-approximation to the maximum weight matching in
non-bipartite graph.

\paragraph*{Density and approximate matchings}

For a set of objects $\OSet$ in $\Re^d$, the density of an object is
the number of bigger objects in the set intersecting it. The
\emphw{density} of the set of objects is the maximum density of the
objects. The density is denoted by $\cDensity$, and the premise is
that for real world inputs it would be small. The intersection graph
of such objects when $\cDensity$ is a constant are known as \emphw{low
   density graphs}, have some nice properties, such as having
separators. See \cite{hq-aapel-17} and references therein. In
particular, for a set of fat objects, the density and the maximum
depth (i.e., the maximum number of object covering any point) are
roughly the same. It is well known that low density graphs are sparse
and have $O( \cDensity n)$ edges, where $n$ is the number of objects
in the set. Since one can compute the intersection graph in
$O(n \log n + \cDensity n)$ time, and plug it into the algorithm of
Duan and Pettie \cite{dp-ltamw-14}, it follows that an approximation
algorithm with running time
$O\bigl(n \log n + (\cDensity n / \eps) \log (1/\eps) \bigr)$. See
\secref{sparse} for details. Thus, the challenge is to get better
running times than this baseline.

\subsection{Our results}

Our purpose here is to develop near linear time algorithms for
approximate matchings for the unit disk graph and the general disk
graph cases. Our results are summarized in \tblref{results}.  Note, in
this paper, we assume the input to our algorithms to be a set of
disks.

\begin{compactenumA}
    \item \textsf{Unit disk graph}.

    \begin{compactenumI}
        \item \textsf{Greedy matching.}  We show in \secref{u:greedy}
        a linear time algorithm for the case of unit disks graph --
        this readily provides a $1/2$-approximation to the maximum
        matching. The algorithm uses a simple grid to ``capture'
        intersections, and then use the locality of the grid to find
        intersection with the remaining set of disks.

        \medskip%
        \item \textsf{$(1-\eps)$-approximation.}  In
        \secref{u:bounded:spread} we show how to get a
        $(1-\eps)$-approximation. The running time can be bounded by
        $O\bigl((n/\eps^2) \log (1/\eps)\bigr)$. If the diameter of
        union of disks is at most $\Delta$, then the running time is
        $O\bigl(n + (\Delta^2/\eps^2) \log (1/\eps)\bigr)$.

        \medskip%
        \item \textsf{$(1-\eps)$-estimation.}  Surprisingly, one can
        do even better -- we show in \secref{u:estimate} how to use
        importance sampling to get $(1\pm\eps)$-approximation (in
        expectation) to the size of the maximum matching in time
        $O(n + \textrm{poly}( \log n, 1/\eps))$.
    \end{compactenumI}

    \medskip%
    \item \textsf{Disk graph}.  The general disk graph case is more
    challenging.

    \begin{compactenumI}
        \item \textsf{Greedy matching.}  The greedy matching algorithm
        can be implemented in $O(n \log n)$ time using sweeping, see
        \lemref{greedy:g} (this algorithm works for any nicely behaved
        shapes).

        \medskip%
        \item \textsf{Approximate bipartite case.}  Here, we are given
        two sets of disks, and consider only intersections across the
        sets as edges.  This case can be solved using range searching
        data-structures as was done by Efrat \etal \cite{eik-ghbmr-01}
        -- they showed how to implement a round of the bipartite
        matching algorithm of Hopcroft and Karp \cite{hk-nammb-73}
        using $O(n)$ queries. It is folklore that running $O(1/\eps)$
        rounds of this algorithm leads to a $(1-\eps)$-approximation
        algorithm.  Coupling this with the data-structure of Kaplan
        \etal \cite{kmrss-dpvdg-20} readily leads to a near linear
        approximation algorithm in this case, see
        \secref{g:bipartite}.

        \medskip%
        \item \textsf{$(1-\eps)$-approximation algorithm.}
        Surprisingly, approximate general matchings can be reduced to
        bipartite matchings via random coloring. Specifically, one can
        compute $(1-\eps)$-approximate matchings using
        $2^{O(1/\eps)} \log n$ invocations of approximate bipartite
        matchings algorithm mentioned above. This rather neat idea is
        due to Lotker \etal \cite{lpp-idam-15} who used in the context
        of parallel matching algorithms. This leads to a near linear
        time algorithm for $(1-\eps)$-approximate matchings for disk
        intersection graphs, see \secref{a:g:matching:r:c} for
        details. The running time of the resulting algorithm is
        $2^{O(1/\eps)} n \log^{O(1)} n$.

        We emphasize that this algorithm assumes nothing about the
        density of the input disks.

        \medskip%
        \item \textsf{$(1-\eps)$-estimation.}  One can get an
        $O( n \log n)$ time estimation algorithm in this case, but one
        needs to assume that the input disks have ``small'' density
        $\cDensity$. Specifically, one computes a separator hierarchy
        and then use importance sampling on the patches, as to
        estimate the sampling size. The details require some care, see
        \secref{recursiveSeparatorApproach}. The resulting algorithm
        has running time $O(n \log n)$ if the density is $o(n^{1/9})$.
        
    \end{compactenumI}

    \item \textsf{General shapes.} Somewhat surprisingly, almost all
    our results extends in a verbatim fashion to intersection graphs
    of general shapes. We need some standard assumptions about the
    shapes:
    \begin{compactenumi}
        \item the boundary of any pair of shapes intersects only a
        constant number of times,

        \item these intersections can be computed in constant time,

        \item one can compute the $x$-extreme and $y$-extreme points
        in a shape in constant time,

        \item one can decide in constant time if a point is inside a
        shape, and 

        \item the boundary of a shape intersects any line a constant
        number of times, and they can be computed in constant time.
    \end{compactenumi}
    For fat shapes of similar size, we also assume the diameters of
    all the shapes are the same up to a constant factor, and any
    object $\obj$
    contains a disk of radius $\Omega( \diamX{\obj})$.

    Since the modification needed to make the algorithms work for the
    more general cases are straightforward, we describe the algorithms
    for disks.
\end{compactenumA}

The full version of the paper is available on the \arXiv
\cite{hy-aammg-22}%
\SoCGVer{ and includes all missing details}.

\begin{table}
    \centering
    \begin{tabular}{|c*{4}{|l}|}
      \hline
      Shape
      & Quality
      & Running time
      & Ref
      & Comment
      \\
      \hline
      \multirow{4}{*}{%
      \begin{minipage}{2.3cm}
          \begin{center}
              Unit disks\\
              Fat shapes of\\
              similar size
      \end{center}
      \end{minipage}
      }
      &%
        $1/2$%
      &%
        $O(n)$%
      &%
        \lemref{unit:disk:g}%
      &%
        Greedy
      \\
      \cline{2-5}
      &
        \multirow{2}{*}{$1-\eps$}
      &        
        $O\bigl( \frac{n}{\eps} \log \frac{1}{\eps}\bigr)\Bigr.$
      &
        \lemref{m:b:diam:unit:disks:2}%
      &
      \\        
      \cline{3-5}
      &%
      &
        $O\bigl(n + (\Delta^2/{\eps^{2}}) \log \frac{1}{\eps} \bigr)\Bigr.$
      &
        \remref{bounded:diam}%
      &%
        $\Delta$ = Diam. disks
      \\
      \cline{2-5}
      &
        $1\pm\eps$%
      &        
        $O(n + \eps^{-6} \log^2 n )\Bigr.$
      &
        \thmref{u:estimate}%
      &
        Estimation
      \\
      \hline
      Unit disks
      & 
        $1-\eps$%
      &
        $O\bigl( \frac{n}{\eps} \log \frac{1}{\eps} \bigr)\Bigr.$
      &
        \lemref{m:b:diam:unit:disks:2}%
      &
        Approximate
      \\
      \hline
      \hline
      \multirow{2}{*}{%
      Disks
      }
      &%
        $1-\eps$
      &
        $O\bigl((n/\eps) \log^{11} n \bigr)\Bigr.$
      &
        \lemref{bipartite} ($\star$)
      &
        Bipartite
      \\
      \cline{2-5}
      &%
        $1-\eps$
      &
        $O(2^{O(1/\eps)} n \log^{12} n)\Bigr. $
      &
        \thmref{a:matching:c:c} ($\star$)%
      &
        General
      \\
      \hline
      \hline
      \multirow{6}{*}{%
      Shapes
      }
      &
        $1/2$
      &
        $O(n \log n)$
      &
        \lemref{greedy:g}%
      &
        Greedy
      \\
      \cline{2-5}
      &%
        $1-\eps$
      &
        $O( n \log n +\frac{m}{\eps} \log \frac{1}{\eps}) \Bigr.$
      &
        \lemref{plug:and:play}
      &
        $m:$ \# edges
      \\
      \cline{2-5}
      &%
        $1-\eps$
      &
        $O( n \log n + \frac{n \cDensity}{\eps} \log \frac{1}{\eps})\Bigr.$
      &
        \lemref{low:density}
      &
        $\cDensity$: Density
      \\
      \cline{2-5}
      &%
        Exact
      &
        $O(n \log n)$
      &
        \lemref{small:matching}%
      &
        \begin{minipage}{2cm}
            ~
            
            Matching size\\
            $O( n^{1/8})\Bigr.$
        \end{minipage}
      \\
      \cline{2-5}
      &%
        $1\pm \eps$
      &
        $O(n \log n + \cDensity^{9}\eps^{-19} \log^2 n) \Bigr.$
      &
        \thmref{g:d::estimte}%
      &
        Estimation
      \\
      \hline
    \end{tabular}
    \caption{Results. The ($\star$) indicates the result works only
       for disks.}
    \tbllab{results}
\end{table}

\section{Preliminaries}

\subsection{Notations}

For a graph $\G$, let $\MOptX{\G}$ denote the maximum cardinality
matching in $\G$. Its size is denoted by
$\kopt = \koptX{\G} = \cardin{\MOptX{\G}}$. For a graph
$\G = (\VV,\EG)$, and a set $X \subseteq \VV$ the \emphw{induced
   subgraph} of $\G$ over $X$ is
$\GInduced{X} = \bigl( X, \Set{ uv \in \EG}{ u, v \in X}\bigr)$.  For
a set $Z$, let $\G - Z$ denote the graph resulting from $\G$ after
deleting from it all the vertices of $Z$. Formally, $\G - Z$ is the
graph $\GInduced{\VV \setminus Z}$.

\begin{defn}
    For a set of objects $\OSet$, the \emphi{intersection graph} of
    $\OSet$, denoted by $\IGraphX{\OSet}$, is the graph having $\OSet$
    as its set of vertices, and there is an edge between two objects
    $\obj,\objA \in \OSet$ if they intersect.  Formally,
    \begin{equation*}
        \IGraphX{\OSet}%
        =%
        \pth{\bigl.\OSet, %
           \Set{\bigl.\obj \objA}{\obj, \objA \in \OSet %
              \text{ and } \obj \cap \objA \neq \emptyset}}.        
    \end{equation*}
\end{defn}

For a point $\p \in \Re^2$, and a set of disks $\DS$, let 
\begin{equation*}
    \DS \sqcap \p = \Set{ \disk \in \DS}{ \p \in \disk}
\end{equation*}
be the set of disks of $\DS$ that contain $\p$. Note, that the
intersection graph $\IGraphX{ \p} = \IGraphX{\DS \cap \p}$ is a
clique.

\begin{defn}
    Consider a set of disks $\DS$. A set $\DS' \subseteq \DS$ is an
    \emphi{independent set} (or simply \emphw{independent}) if no pair
    of disks of $\DS'$ intersects.
\end{defn}

\subsection{Low density and separators}

The following is standard by now, see Har-Peled and Quanrud
\cite{hq-aapel-17} and references therein.

\begin{defn}%
  \deflab{low:density}%
  A set of objects $\OSet$ in $\Re^d$ (not necessarily convex or
  connected) has \emphi{density $\cDensity$} if any object $\obj$ (not
  necessarily in $\OSet$) intersects at most $\cDensity$ objects in
  $\OSet$ with diameter equal or larger than the diameter of
  $\obj$. The minimum such quantity is denoted by $\densityX{\OSet}$.
  A graph that can be realized as the intersection graph of a set of
  objects $\OSet$ in $\Re^d$ with density $\cDensity$ is
  \emphi{$\cDensity$-dense}.  The set $\OSet$ is \emphi{low density}
  if $\cDensity = O(1)$. %
\end{defn}%

\begin{defn}%
    \deflab{separator}%
    Let $\G = (\VV,\EG)$ be an undirected graph.  Two sets
    $\SetA, \SetB \subseteq \VV$ are \emphi{separate} in $\G$ if
    \begin{compactenumi}
        \smallskip%
        \item $\SetA$ and $\SetB$ are disjoint, and
        \smallskip%
        \item there is no edge between the vertices of $\SetA$ and the
        vertices of $\SetB$ in $\G$.
    \end{compactenumi}
    \smallskip%
    For a constant $\cSep \in (0,1)$, a set $\SepSet \subseteq \VV$ is
    a \emphi{$\cSep$-separator} for a set $\SetC \subseteq \VV$, if
    $\SetC \setminus \SepSet$ can be partitioned into two
    \emph{separate} sets $\SetA$ and $\SetB$, with
    \begin{math}
        \cardin{\SetA} \leq \cSep\cardin{\SetC}
    \end{math}
    and
    \begin{math}
        \cardin{\SetB} \leq \cSep \cardin{\SetC}.
    \end{math}
\end{defn}

\begin{lemma}[\cite{hq-aapel-17}]
    \lemlab{w:s:low:density}%
    Let $\OSet$ be a set of $n$ objects in $\Re^d$ with density
    $\cDensity$.  One can compute, in expected linear time, a sphere
    $\sphereC$ that intersects in expectation
    $\tau = O\pth{\cDensity + \cDensity^{1/d} n^{1-1/d} }$ objects of
    $\OSet$. The sphere is computed by picking uniformly its radius
    from some range of the form $[\alpha, 2\alpha]$.  Furthermore, the
    total number of objects of $\OSet$ strictly inside/outside
    $\sphereC$ is at most $\cSep n$, where $\cSep$ is a constant that
    depends only on $d$.  Namely, the intersection graph
    $\IGraphX{\OSet}$ has a separator of size $\tau$ formed by all the
    objects of $\OSet$ intersecting $\sphereC$.
\end{lemma}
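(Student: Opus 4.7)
The plan is to adapt the classical geometric separator construction of Miller--Teng--Thurston--Vavasis and Smith--Wormald to the density-parametrized setting, with $\cDensity$ playing the role of a packing constant. The strategy has three stages: pick a ``deep'' center point $p$, identify a dyadic scale $\alpha$ such that every sphere of radius in $[\alpha, 2\alpha]$ centered at $p$ separates $\OSet$ in a balanced way, and then sample $r$ uniformly from $[\alpha, 2\alpha]$ and bound the expected number of objects of $\OSet$ intersecting $\sphereX{p}{r}$.

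For the center, I would take $p$ to be an approximate centerpoint of a set of representative points---one per object---which can be computed in expected linear time (for instance, by iterated Radon sampling). The centerpoint property, together with the monotonicity in $r$ of $N(r) = \cardin{\Set{\obj \in \OSet}{\obj \subseteq B(p,r)}}$, implies that a dyadic scale $\alpha$ exists for which every $r \in [\alpha, 2\alpha]$ leaves at most $\cSep n$ objects strictly inside and at most $\cSep n$ strictly outside $\sphereX{p}{r}$, for a constant $\cSep = \cSep(d) < 1$. A single linear scan over the $n$ object-to-$p$ distances, bucketed by dyadic scale, identifies such an $\alpha$.

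For the intersection count, the key observation is that a fixed $\obj$ is met by $\sphereX{p}{r}$ only when $r$ lies in an interval of length at most $\diamX{\obj}$, so
\begin{equation*}
    \Prob{\sphereX{p}{r} \cap \obj \neq \emptyset}
    \leq \min\!\pth{\diamX{\obj}/\alpha,\; 1}.
\end{equation*}
Summing and splitting $\OSet$ into ``large'' objects ($\diamX{\obj} \geq \alpha$) and ``small'' objects, the large-object contribution is $O(\cDensity)$ by the density definition applied to the ball $B(p, 2\alpha)$, since only large objects meeting this ball can intersect the sphere at all, and all such objects are of diameter exceeding $B(p, 2\alpha)$'s. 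The small-object contribution is $\sum_{\diamX{\obj} < \alpha} \diamX{\obj}/\alpha$, which I would bound via the standard low-density packing inequality $\sum \diamX{\obj} = O\!\pth{\cDensity^{1/d} m^{1-1/d} \alpha}$, valid for $m$ objects of density $\cDensity$ whose union lies in a ball of radius $O(\alpha)$. Combined, these give the claimed $\tau = O\!\pth{\cDensity + \cDensity^{1/d} n^{1-1/d}}$.

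The main obstacle, I expect, will be establishing and cleanly applying the small-object packing inequality, since this is where the nontrivial exponent $1 - 1/d$ enters---it follows from the density definition by bounding $\sum \diamX{\obj}^d$ linearly in $\cDensity \alpha^d$ and then invoking H\"older, but the regrouping by dyadic scale has to be done carefully to avoid losing extra logarithmic factors, and one must also verify that the two conditions ``balanced $\alpha$'' and ``sparse expected intersection'' can be achieved at the \emph{same} scale (equivalently, that the dyadic search produces a scale that is simultaneously good for both properties). Once both contributions are summed, the separator itself is the set of objects intersecting $\sphereC$: it has the required expected size by construction and splits $\OSet$ in the sense of \defref{separator}.
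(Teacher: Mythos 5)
The paper does not actually prove this lemma; it is imported verbatim from Har-Peled and Quanrud (your first-listed reference), so the right comparison is against the argument there, which in turn descends from Smith--Wormald. At that level your plan is structurally correct --- random-radius sphere, split the objects into ``large'' and ``small'' relative to $\alpha$, charge the large ones to the density bound via a constant-size covering of the shell by balls of radius $\Theta(\alpha)$, and charge the small ones through a packing inequality. The two places where your sketch would actually run into trouble are the ones you are already half-aware of, so let me make them precise.

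First, the balance step. A centerpoint of the representative points gives you a \emph{halfspace} balance guarantee, not a ball (or sphere) one; you cannot get from ``every halfspace through $p$ misses at most a $d/(d+1)$ fraction'' to ``some annulus $[\alpha,2\alpha]$ around $p$ leaves a constant fraction on each side'' by a dyadic scan alone. (For a bad choice of $p$, the cumulative count $N(r)$ can jump from near $0$ to near $n$ inside a single dyadic window, in which case \emph{no} $\alpha$ works for that $p$.) The device that actually works is to take $\alpha$ to be the radius of the \emph{smallest} ball $B(p,\alpha)$ containing a $1/c$ fraction of the representative points, for a constant $c = c(d)$ chosen large enough; then $B(p,\alpha)$ has $\geq n/c$ points inside by construction, while $B(p,2\alpha)$ is covered by $O(c')$ balls of radius strictly less than $\alpha$, each with $< n/c$ points by minimality, so the inside count at radius $2\alpha$ is also at most a constant fraction of $n$. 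That fixes both $p$ and $\alpha$ simultaneously, and the sphere is then drawn with radius uniform in $[\alpha,2\alpha]$. In particular, the ``same-scale'' worry you raise at the end dissolves: the balance is baked into the choice of $\alpha$, and the sparsity bound is computed for that fixed $\alpha$ --- there is no second constraint to reconcile.

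Second, the small-object packing bound. Bounding $\sum \diamX{\obj}^d = O(\cDensity\,\alpha^d)$ and then invoking H\"older is the route you propose, but the naive dyadic grouping gives $\sum_j O(\cDensity\,\alpha^d)$ over $\log$-many scales $j$, which is exactly the log you are worried about. The cleaner argument avoids this: sort the small objects intersecting $B(p,O(\alpha))$ by decreasing diameter $\delta_1 \geq \delta_2 \geq \cdots$, and observe that covering $B(p,O(\alpha))$ by $O\bigl((\alpha/\delta_i)^d\bigr)$ balls of radius $\delta_i/2$ --- each of which meets at most $\cDensity$ objects of diameter $\geq \delta_i$, hence at most $\cDensity$ of the first $i$ objects --- forces $i = O\bigl(\cDensity(\alpha/\delta_i)^d\bigr)$, i.e.\ $\delta_i = O\bigl((\cDensity\,\alpha^d / i)^{1/d}\bigr)$. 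Summing $\delta_i/\alpha$ over $i \leq n$ gives $O\bigl(\cDensity^{1/d}\sum_i i^{-1/d}\bigr) = O\bigl(\cDensity^{1/d} n^{1-1/d}\bigr)$ directly, with no H\"older and no log. (This is the same Cauchy--Schwarz-style counting that the paper's own proof of \lemref{sep:innocent} uses in the special case $d=2$, $\cDensity = 1$; it is worth reading that proof as a template.)
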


\subsection{Importance sampling}

Importance sampling is a standard technique for estimating a sum of
terms. Assume that for each term in the summation, one can quickly get
a coarse estimate of its value. Furthermore, assume that better
estimates are possible but expensive. Importance sampling shows how to
sample terms in the summation, then acquire a better estimate {\em
   only for the sampled terms}, to get a good estimate for the full
summation. In particular, the number of samples is bounded
independently of the original number of terms, depending instead on
the coarseness of the initial estimates, the probability of success,
and the quality of the final output estimate.

\begin{lemma}[\cite{bhrrs-eeiso-20}]
    \lemlab{importance:alg}%
    Let $(\Entity_1, w_1,e_1), \ldots, (\Entity_r,w_r,e_r)$ be given,
    where $\Entity_i$'s are some structures, and $w_i$ and $e_i$ are
    numbers, for $i=1, \ldots, r$. Every structure $\Entity_i$ has an
    associated weight $\wX{\Entity_i} \geq 0$ (the exact value of
    $\wX{\Entity_i}$ is not given to us). In addition, let
    $\epsA > 0$, $\BadProb$, $b$, and $M$ be parameters, such that:
    \smallskip%
    \begin{compactenumi}[leftmargin=3em]
        \item $\forall i \quad w_i,e_i \geq 1$, \smallskip%
        \item $\forall i \quad e_i/b \leq \wX{\Entity_i} \leq e_i b$,
        and \smallskip%
        \item $\Gamma = \sum_i w_i \cdot \wX{\Entity_i} \leq M$.
    \end{compactenumi}
    \smallskip%
    Then, one can compute a new sequence of triples
    $(\Entity_1', w_1', e_1'), \ldots, (\Entity_t',w_t', e_t')$, that
    also complies with the above conditions, such that the estimate
    \begin{math}
        Y = \sum_{i=1}^t w_i' \wX{\Entity_i'}
    \end{math}
    is a multiplicative $(1\pm \epsA)$-approximation to $\Gamma$, with
    probability $\geq 1- \BadProb$.  The running time of the algorithm
    is $O( r)$, and size of the output sequence is
    \begin{math}
        t = O\pth{ b^4 \epsA^{-2} (\log \log M + \log \BadProb^{-1})
           \log M }.
    \end{math}
\end{lemma}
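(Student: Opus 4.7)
The plan is classical importance sampling with proxy weights $v_i = w_i e_i$. Condition (ii) gives $v_i/b \leq w_i \wX{\Entity_i} \leq b v_i$, so $V = \sum_i v_i$ satisfies $V/b \leq \Gamma \leq bV$ and can be computed directly from the input in $O(r)$ time. I draw indices according to the distribution $p_j = v_j/V$; a sampled index $j$ contributes the unbiased estimator $X = w_j \wX{\Entity_j}/p_j$ for $\Gamma$, so that evaluating the true weight $\wX{\Entity_j}$ only has to be done at the $t$ sampled structures. The output triple corresponding to $j$ is $(\Entity_j,\, w_j/(t p_j),\, e_j)$, which inherits condition (ii) verbatim, and condition (i) after a trivial rescaling.

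For the concentration, each realization is bounded: since $\wX{\Entity_j} \leq b e_j$, one has $X = V \wX{\Entity_j}/e_j \leq bV \leq b^2 \Gamma$, and a short calculation gives $\Ex{X^2} \leq b V \cdot \Gamma \leq b^2 \Gamma^2$. Averaging $t_0 = O(b^4 \epsA^{-2} \log \BadProb^{-1})$ independent copies and applying Hoeffding's inequality yields a $(1\pm \epsA)$-multiplicative estimate with probability at least $1 - \BadProb$; this already accounts for the $b^4 \epsA^{-2}$ dependence in the target bound.

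The remaining $\log M$ and $\log \log M$ factors come from making the output conform to the same three conditions, so that the lemma can be chained. I stratify the indices into $O(\log M)$ buckets $B_k = \brc{i : v_i \in [2^k, 2^{k+1})}$ and run the sampling procedure within each bucket separately. Buckets whose total $v$-mass is below $\epsA V / \log M$ are dropped, contributing additive error $O(\epsA V) = O(\epsA b \Gamma)$, which is absorbed into $\epsA$ up to constants. Applying the single-bucket concentration above to each of the $O(\log M)$ heavy buckets with failure probability $\BadProb / \log M$ introduces the $\log \BadProb^{-1} + \log \log M$ term in each per-bucket sample count $t_k$, and summing over the $O(\log M)$ buckets gives the claimed overall size $t = O\pth{b^4 \epsA^{-2} \pth{\log \log M + \log \BadProb^{-1}} \log M}$.

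The main obstacle is not the variance computation but preserving condition (iii), namely that the output sequence again satisfies $\sum_i w_i' \wX{\Entity_i'} \leq M$ with high probability, so that the lemma can be invoked compositionally by the client algorithms in \secref{u:estimate} and \secref{recursiveSeparatorApproach}. This forces the stratified scheme above, together with a careful rounding of each per-bucket sampling rate to a power of two, rather than a single global importance-sampling step; the delicate accounting across buckets is where the extra $\log M$ factors enter and is the subtlest part of the argument.
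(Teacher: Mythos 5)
The paper does not actually prove this lemma; it imports it verbatim from \cite{bhrrs-eeiso-20} and never gives an argument, so there is no in-paper proof to compare against. Judging your sketch on its own: the unbiased-estimator setup with proxy weights $v_i = w_i e_i$ is right, the bounds $X \leq bV \leq b^2\Gamma$ and $\Ex{X^2} \leq bV\Gamma$ are correct, and Hoeffding does yield the $b^4\epsA^{-2}\log\BadProb^{-1}$ piece of the sample count. However, the part you yourself flag as the heart of the matter --- making the output again satisfy conditions (i)--(iii), which is where the $\log M$ and $\log\log M$ factors live --- is not actually carried out, and your diagnosis of which condition is the obstacle is off.

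Concretely: you say the output inherits condition (i) ``after a trivial rescaling,'' but the sampled weight is $w'_j = w_j/(t\,p_j) = V/(t\,e_j)$, which can be far below $1$ whenever $e_j$ is comparable to $V$ (for instance $e_j = \Theta(M)$, $V = O(M)$). No global rescaling repairs this without scaling $\Gamma'$ by the same factor and destroying the $(1\pm\epsA)$ guarantee, so this step is genuinely wrong as written. Meanwhile condition (iii), which you call ``the main obstacle,'' is essentially automatic: once $Y$ is a $(1\pm\epsA)$-approximation of $\Gamma\leq M$, you get $\Gamma'\leq(1+\epsA)M$, which is fine for chaining after a cosmetic adjustment of $M$. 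The real pressure is from condition (i), and it is this pressure that forces the stratification. Your stratification as stated also does not resolve it: you bin by $v_i = w_i e_i$, but what controls $w'_j = V_k/(t_k e_j)$ inside a bucket is $e_j$ alone, which can still range over many orders of magnitude within a $v$-bucket. A workable version bins by $e_i$, observes $V_k \geq |B_k|\,2^k$ from $w_j\geq 1$, and keeps small bins wholesale; the $\log\log M$ term then comes out of the union bound over the $O(\log M)$ bins together with the rounding of per-bucket rates that you gesture at but do not justify. Finally, dropping buckets with $v$-mass below $\epsA V/\log M$ costs additive error up to $\epsA b\Gamma$, not $\epsA\Gamma$; absorbable, but a sign the constants have not been tracked. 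So: right skeleton and right variance bookkeeping, but the compression step that produces a valid reusable triple sequence --- the actual content of the lemma beyond plain Monte Carlo --- is missing.
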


\begin{remark}
    \remlab{L:2}%
    The algorithm of \lemref{importance:alg} does not use the
    entities $\Entity_i$ directly at all. In particular, the
    $\Entity_i'$s are just (reweighed) copies of some original
    structures. The only thing that the above lemma uses is the
    estimates $e_1, \ldots, e_r$ and the weights $w_1, \ldots, w_r$.

    (B) We are going to use \lemref{importance:alg}, with
    $\epsA = O(\eps)$, $\BadProb = 1/n^{O(1)}$, $b = 2$, and $M =
    n$. As such, the size of the output list is 
    \begin{math}
        \Llen%
        =%
        O(\eps^{-2} \log^2 n )
    \end{math}
\end{remark}

\subsection{Background on matchings}

For a graph $\G$, a \emphi{matching} is a set $\CM \subseteq \EGX{\G}$
of edges, such that no pair of them not share an endpoint. A matching
that has the largest cardinality possible for a graph $\G$, is a
\emphi{maximum} matching.  Given a graph $\G$ and a matching $\CM$ on
$\G$, an \emphi{alternating path} is a path with edges that alternate
between matched edges (i.e., edges that are in $\CM$) and unmatched
edges (i.e., edges in $\EGX{\G} \setminus \CM$). If both endpoints of
an alternating path are unmatched (i.e., \emphi{free}), then it is an
\emphi{augmenting path}. In the following, let $\MOpt$ denote a
maximum cardinality matching in $\G$, and let
$\kopt = \koptX{\G} = |\MOpt|$ denote its size. For $\beta \in [0,1]$,
a matching $\CMB \subseteq \EGX{\G}$ is an \emphi{$\beta$-matching}
(or $\beta$-approximate matching) if $\cardin{\CMB} \geq \beta \kopt$.
The set of vertices covered by the matching $\CMB$ is denoted by
$\VX{\CMB} = \displaystyle\bigcup_{uv \in \CMB} \{ u, v\}$.

The \emphw{length} of a path is the number of its edges.

\SoCGVer{The following claim is well known.}

\begin{lemma}
    \lemlab{augmenting}%
    For any $\eps \in (0,1)$, if $|\CM| < (1 - \eps)|\MOpt|$, then
    there are at least $(\eps/2) |\MOpt|$ disjoint augmenting paths of
    $\CM$, each of length at most $4 / \eps$.
\end{lemma}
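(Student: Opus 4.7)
The plan is to argue via the symmetric difference $\CM \triangle \MOpt$, following the classical Hopcroft--Karp style analysis. First, I would observe that since both $\CM$ and $\MOpt$ are matchings, every vertex has degree at most $2$ in $\CM \triangle \MOpt$, so this edge set decomposes into a collection of vertex-disjoint paths and even cycles, each of which alternates between $\CM$-edges and $\MOpt$-edges. Even cycles and even-length paths contain equal numbers of edges from each matching, while odd-length paths are ``augmenting'' for one of the two matchings (with one extra edge coming from the matching of larger contribution).

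Let $P$ denote the set of connected components of $\CM \triangle \MOpt$ that are augmenting paths for $\CM$ (i.e., odd-length paths whose endpoints are free with respect to $\CM$, and which have one more $\MOpt$-edge than $\CM$-edge). Counting edges on each side, one gets $|\MOpt| - |\CM| \leq |P|$, since every other component contributes $0$ to $|\MOpt|-|\CM|$ and each element of $P$ contributes exactly $1$. By the hypothesis $|\CM| < (1-\eps)|\MOpt|$, this yields $|P| > \eps |\MOpt|$, giving that many vertex-disjoint augmenting paths of $\CM$.

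Next I would extract the length bound. Let $P_L \subseteq P$ be the augmenting paths of length greater than $4/\eps$. Each such path has odd length at least $\lfloor 4/\eps \rfloor + 1$ and therefore contains at least $2/\eps$ edges of $\CM$. Since the paths in $P$ are vertex-disjoint, hence edge-disjoint, and all their $\CM$-edges lie inside $\CM$, we get
\begin{equation*}
   \tfrac{2}{\eps}\,|P_L| \;\leq\; |\CM| \;\leq\; |\MOpt|,
\end{equation*}
so $|P_L| \leq (\eps/2)|\MOpt|$. The complement $P \setminus P_L$ then has size at least $\eps|\MOpt| - (\eps/2)|\MOpt| = (\eps/2)|\MOpt|$, and its elements are vertex-disjoint augmenting paths of $\CM$ of length at most $4/\eps$, as required.

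There is no real obstacle here; the only thing worth being careful about is ensuring the edge-counting inequality $|\MOpt| - |\CM| \leq |P|$ is argued cleanly from the component-wise analysis of $\CM \triangle \MOpt$, and that the rounding in the length bound $\ell > 4/\eps \Rightarrow (\ell-1)/2 \geq 2/\eps$ is valid (it is, since $\ell$ is an odd integer and hence $\ell \geq \lceil 4/\eps \rceil + 1 \geq 4/\eps + 1$ when $4/\eps$ is not already an odd integer, with the remaining cases handled by $\ell-1 \geq 4/\eps$).
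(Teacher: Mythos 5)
Your proof follows essentially the same route as the paper's: decompose $\CM \oplus \MOpt$ into alternating paths and even cycles, use the contribution argument to conclude that the number of augmenting paths exceeds $\eps|\MOpt|$, then bound how many of them can be long. The only variation is in the last counting step: the paper bounds the total number of edges in $\CM \oplus \MOpt$ by $2|\MOpt|$ and applies a Markov-style average argument, whereas you charge only the $\CM$-edges lying on long paths against $|\CM|$.

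That variation is fine in principle, but your rounding justification at the end is wrong. You assert that $\ell > 4/\eps$ with $\ell$ odd forces $(\ell-1)/2 \geq 2/\eps$, arguing that $\ell \geq \lceil 4/\eps\rceil + 1$ whenever $4/\eps$ is not an odd integer. That implication fails when $4/\eps$ is non-integral but $\lceil 4/\eps\rceil$ happens to be odd: then $\ell = \lceil 4/\eps\rceil$ is a legitimate odd length exceeding $4/\eps$, and $(\ell-1)/2 = (\lceil 4/\eps\rceil - 1)/2 < 2/\eps$. For instance, $\eps = 8/13$ gives $4/\eps = 6.5$; a path of length $7$ is ``long'' but carries only $3 < 3.25 = 2/\eps$ edges of $\CM$, so $(2/\eps)|P_L| \leq |\CM|$ does not follow from your per-path bound. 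The lemma is still true and the repair is easy --- the cleanest is to drop the integer rounding entirely and charge raw lengths against the total edge count of the symmetric difference, as the paper does: each long path has length strictly greater than $4/\eps$, and the total number of edges in $\CM \oplus \MOpt$ is at most $2|\MOpt|$, so $|P_L| < \tfrac{2|\MOpt|}{4/\eps} = \tfrac{\eps}{2}|\MOpt|$ --- but the step as you wrote it needs fixing.
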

\NotSoCGVer{%
\begin{proof}
    This is well known, and we include a proof for the sake of
    completeness.  Suppose we are given the current matching $\CM$ and
    the optimal matching $O$.  Let $k = |O|$.  Consider the symmetric
    difference
    $\MXor = \CM \oplus \MOpt = (\CM \setminus \MOpt) \cup (\MOpt
    \setminus \CM)$ -- it is a collection of alternating paths and
    cycles.  For a path $\pi \in \MXor$, its \emphw{contribution} is
    $\beta(\pi) = |\MOpt \cap \pi| - |\CM \cap \pi| \in \{-1, 0,
    +1\}$.  Let $\Pi$ be the set of all the augmenting paths in
    $\MXor$ (an augmenting path has contribution of $+1$).  Observe
    that
    \begin{equation}
        |\Pi|
        \geq%
        \sum_{\pi \in \MXor} \beta(\pi)%
        \geq%
        |\MOpt| - |\CM| > |\MOpt| - (1 - \eps)|\MOpt| = \eps k.
    \end{equation}
    We have that $|\MXor| \leq |\CM| + |\MOpt| \leq 2k$, and as such,
    the average length of an augmenting path in $\Pi$ is at most
    $2k/ (\eps k ) = 2/\eps$. By Markov's inequality, at most half of
    them can be twice larger than the average, which implies the
    claim.
\end{proof}
}

\section{Approximate matchings for unit disk graph}

\subsection{Greedy maximal matching}
\seclab{u:greedy}

In a graph $\G$, the greedy maximal matching can be computed by
repeatedly picking an edge of $\G$, adding it to the matching,and
removing the two vertices of the edges from $\G$. We do this
repeatedly until no edges remain. The resulting \emphi{greedy
   matching} is a maximal matching, and every maximal matching is a
$1/2$-approximation to the maximum matching. To avoid the
maximum/maximal confusion, we refer to such a matching as a greedy
matching.

\begin{lemma}
    \lemlab{unit:disk:g}%
    Let $\DS$ be a set of $n$ unit disks in the plane, where a
    \emphw{unit disk} has radius one. One can compute, in $O(n)$ time,
    a $(1/2)$-approximate matching for $\IGraphX{\DS}$, where
    $\IGraphX{\DS}$ is the intersection graph of the disks of $\DS$.
\end{lemma}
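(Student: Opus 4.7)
The plan is to compute a maximal matching in $O(n)$ time, since any maximal matching is automatically a $(1/2)$-approximation to the maximum. The central geometric fact is that two unit disks intersect iff the distance between their centers is at most $2$. We lay down a uniform grid in the plane with cell side $\sqrt{2}$: its diameter equals $2$, so any two disks whose centers fall in the same cell are guaranteed to intersect. Conversely, a unit disk can only intersect unit disks whose centers lie in a constant-size window of neighboring cells (roughly a $5\times 5$ patch around its own cell). Using hashing on the integer coordinates of the nonempty cells, we insert all $n$ disks into their containing cells in $O(n)$ time on a real-RAM with hashing.

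The algorithm now runs in two phases. In the first phase we scan each nonempty cell and arbitrarily pair up the disks it contains, adding each pair as an edge to the matching $\CM$; by the cell-diameter observation every such pair is a valid edge of $\IGraphX{\DS}$, and the phase leaves at most one leftover disk per cell. In the second phase we iterate over the leftover disks and, for each leftover $\disk$, probe the $O(1)$ cells in its local neighborhood in the hash table, searching for another leftover disk that is still unmatched and whose center lies within distance $2$ of the center of $\disk$; if one is found, the two are matched and removed from the leftover set. Each leftover disk touches only $O(1)$ cells and only $O(1)$ candidate disks per cell, so this phase also runs in $O(n)$ time.

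It remains to check that $\CM$ is maximal. Suppose for contradiction that two disks $\disk,\disk'$ are both unmatched at termination and satisfy $\disk \cap \disk' \neq \emptyset$. Then the distance between their centers is at most $2$, so $\disk'$ lies in the local neighborhood of the cell containing $\disk$; whichever of $\disk,\disk'$ was processed later in the second phase would have discovered the other as a still-unmatched intersecting candidate and paired them, a contradiction. Hence $\CM$ is maximal, and by the standard fact that every maximal matching has cardinality at least $\kopt/2$, it is a $(1/2)$-approximation. The only mild obstacle is choosing the grid side so that both the ``same cell implies intersection'' and ``intersection implies constant-size neighborhood'' properties hold simultaneously; side length $\sqrt{2}$ makes both immediate, and the rest is bookkeeping standard for grid-hashing arguments.
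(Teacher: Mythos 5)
Your proof is correct, and it is a genuine variant of the paper's argument rather than the same one. Both proofs are grid-hashing, two-phase constructions, but the bucketing rule differs. The paper registers each unit disk at \emph{every} integer lattice point it covers (between one and five points), uses the fact that disks sharing a lattice point form a clique for phase one, and for phase two checks the $8$ lattice points adjacent to each active point, relying implicitly on the geometric claim that two intersecting unit disks in the leftover set always cover lattice points at Chebyshev distance at most one from each other. You instead register each disk at the single cell of a grid of side $\sqrt{2}$ that contains its center, use the cell-diameter bound ($\sqrt{2}\cdot\sqrt{2}=2$) to justify phase one, and in phase two scan a $5\times 5$ window of cells and test the candidate's center distance explicitly. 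Your version buys two things: each disk lives in exactly one bucket (so there is no need to reason about a disk being registered at several points), and the explicit distance check in phase two makes maximality immediate, with no hidden lattice-geometry claim about which neighborhood radius suffices. Indeed, the paper's $8$-neighborhood appears to be too small: for example, unit disks centered at $(0, 0.49)$ and $(1.99, 0.49)$ overlap but cover only $\{(0,0),(0,1)\}$ and $\{(2,0),(2,1)\}$ respectively, and no point of one set is an $8$-neighbor of a point of the other, so the paper's phase two as written would miss this pair; a $5\times 5$ (or $24$-neighbor) window fixes it. Your explicit-distance-test formulation sidesteps this pitfall entirely, and the running time and the "maximal implies half-approximate" conclusion are argued correctly.
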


\begin{proof}
    For every disk, compute all the integral grid points that it
    covers. Every disk covers at least one, and at most five grid
    points. We use hashing to compute for every grid point the disks
    that covers it. These lists can be computed in $O(n)$ time
    overall. Next, for every  grid point that stores more than one
    disk,  scan it, and break it into pairs, where every pair is
    reported as a matching edge, and the two disks involved are
    removed.

    By the end of this process, we computed a partial matching $\CM$,
    and we have a set $\DS'$ of leftover disks that are not matched
    yet. The disks of $\DS'$ cover every integral grid point at most
    once. Using the hash table one can look for intersections -- for
    every grid point that is active (i.e., has one disk of $\DS'$
    covering it), the algorithm lookup in the hash table any disk that
    covers any of the $8$ neighboring grid points. Each such
    neighboring point offers one disk that might intersect the current
    disk. If we find an intersecting pair, the algorithm outputs it
    (removing the two disks involved). This requires $O(1)$ time per
    active grid point, and linear time overall. At the end of this
    process, all the remaining disks are disjoint, implying that the
    computed matching is maximal and thus a $(1/2)$-approximation to
    the maximum matching.
\end{proof}

\subsection{\TPDF{$(1-\eps)$}{1-epsilon}-approximation}
\seclab{u:bounded:spread}

\begin{lemma}
    \lemlab{m:b:diam:unit:disks}%
    Let $\DS$ be a set of $n$ unit disks, and let $\eps \in (0,1)$ be
    a parameter. Then, one can compute, in
    $O\bigl((n/\eps^2) \log (1/\eps)\bigr)$ time, an
    $(1-\eps)$-matching in $\IGraphX{\DS}$, where $\IGraphX{\DS}$ is
    the intersection graph of $\DS$.

    If the diameter of $\cup \DS$ is $\Delta$, then the running time
    is $O\bigl(n + (\Delta^2/\eps^3) \log (1/\eps)\bigr)$.
\end{lemma}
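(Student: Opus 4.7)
My plan is to apply the $(1-\eps)$-approximate matching algorithm of Duan and Pettie~\cite{dp-ltamw-14}, which runs in $O(m \eps^{-1} \log \eps^{-1})$ on a graph with $m$ edges, after first reducing the density of the input intersection graph to $O(1/\eps)$ by a grid-based clique reduction.

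I would overlay the plane with a grid of cell side $s = \Theta(\sqrt{\eps})$. The cell diagonal $s\sqrt{2}$ is strictly less than the disk diameter $2$, so any two unit disks with centers in the same cell necessarily intersect, and each cell induces a clique in $\IGraphX{\DS}$. Bucket the $n$ disks into cells by hashing in $O(n)$ time. Within each cell $c$ with $k_c$ disks, greedily pair up $\lfloor k_c/2 \rfloor$ of them (optimal within a clique), producing a partial matching $\CM_0$ and leaving at most one ``representative'' per cell. Let $H$ be the subgraph of $\IGraphX{\DS}$ induced on the representatives.

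Since a unit disk overlaps only $O(1/s^2) = O(1/\eps)$ cells of the grid, each representative has at most $O(1/\eps)$ neighbors in $H$; hence $H$ has $O(n/\eps)$ edges and can be built in $O(n/\eps)$ time by scanning each representative's $O(1/\eps)$ neighboring cells. Applying Duan--Pettie to $H$ produces a $(1-\eps)$-approximate matching $\CM_1$ of $H$ in time $O((n/\eps) \cdot \eps^{-1} \log \eps^{-1}) = O((n/\eps^2) \log(1/\eps))$. The output is $\CM_0 \cup \CM_1$. For the bounded-diameter case, the region of diameter $\Delta$ contains only $O(\Delta^2/\eps)$ non-empty cells, so $H$ has $O(\Delta^2/\eps^2)$ edges, and Duan--Pettie on $H$ runs in $O((\Delta^2/\eps^3) \log(1/\eps))$ time, giving the claimed bound after the $O(n)$ preprocessing.

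The hard part will be the approximation analysis. The plan is to compare $\cardin{\CM_0} + \cardin{\CM_1}$ with $\kopt$ cell by cell: cell $c$ contributes at most $k_c/2$ to $\kopt$ and exactly $\lfloor k_c/2 \rfloor$ to $\CM_0$, with an extra $1/2$ if the representative of $c$ is matched by $\CM_1$, so the per-cell shortfall is at most $1/2$ and only arises for cells of odd size whose representative remains unmatched. The subtle point is that the representative of a cell may lie up to $s\sqrt{2} = \Theta(\sqrt{\eps})$ from the disk that $\MOpt$ actually matches across cells, so an $\MOpt$-edge of length close to $2$ may fail to correspond to any edge in $H$; I expect to resolve this either by choosing representatives greedily in favor of those with cross-cell neighbors, or by trying $O(1)$ grid shifts and keeping the best output, either of which should confine the ``boundary'' loss to at most an $\eps$-fraction of $\kopt$, so that $\cardin{\CM_0} + \cardin{\CM_1} \ge (1-\eps)\kopt$.
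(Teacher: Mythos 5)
There is a genuine gap, and it is exactly the one you flag at the end, but it is fatal to the whole approach rather than a minor boundary issue. Collapsing each cell to a single representative can only realize \emph{one} cross-cell edge per cell, whereas the optimal matching can use up to $\Theta(1/\eps)$ cross-cell edges incident to a single cell (one to each of the $\Theta(1/\eps)$ neighboring cells of your $\sqrt{\eps}$-grid). Your per-cell accounting (``shortfall at most $1/2$, only for odd cells with unmatched representative'') is formally correct, but the number of such cells is not $O(\eps \kopt)$ --- it can be $\Theta(\kopt)$, so the total loss is a constant fraction. Concretely: place $5$ unit disks with centers at the same point $\p$ in a cell $\cell$, and $5$ more unit disks with centers spread evenly on a circle of radius just under $2$ about $\p$ (consecutive ones are at chord distance $4\sin(\pi/5) > 2$, hence pairwise non-intersecting); for $\eps$ small each of these outer $5$ lies in its own cell. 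Then $\kopt = 5$ (match each of the inner $5$ to a distinct outer disk). Your algorithm produces $\lfloor 5/2\rfloor = 2$ edges in $\CM_0$, keeps one representative from $\cell$ and all $5$ outer disks, and $H$ is $K_{1,5}$, so $\CM_1 \leq 1$; the output has size at most $3$, a $3/5$-approximation. Replicating this gadget $\Theta(n)$ times fixes the ratio at $3/5$ for all $n$. Neither of your proposed repairs helps here: the $c$-disks have coincident centers, so no $O(1)$ choice of grid shift separates them, and ``greedy representative choice'' is vacuous because all disks in $\cell$ have identical neighborhoods.

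The paper avoids this by keeping \emph{all} disks in each small group, not a single representative. It uses a unit grid (cells of side $1$, not $\sqrt{\eps}$) and assigns each disk to an integral grid point it contains, forming \emph{towers}. The key exchange argument of Bonnet~\etal{} is that a maximum matching can be assumed to have at most one cross edge \emph{between any fixed pair} of towers; since a tower interacts with only $O(1)$ other towers in a unit grid, this caps the number of cross edges per tower by a constant (about $48$). Any tower with more than $\Theta(1/\eps)$ disks is then safe to ``collapse'' entirely by taking its greedy internal matching --- this gains $\Omega(1/\eps)$ matching edges while destroying at most $O(1)$ cross edges of $\MOpt$, an $\eps$-fraction loss. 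After repeatedly removing such tall towers, every remaining tower has $O(1/\eps)$ disks, so the residual intersection graph has $O(n/\eps)$ edges, and Duan--Pettie on that graph gives the $O((n/\eps^2)\log(1/\eps))$ bound. The key structural facts you would need (unit grid so the number of neighboring towers is $O(1)$, and the Bonnet~\etal{} exchange argument bounding cross edges per tower pair) are precisely what makes the greedy-collapse step lossless up to $\eps$; a $\sqrt{\eps}$-grid with one representative per cell has neither property.
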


\newcommand{\nT}{\Mh{\tau}}

\begin{proof}
    Using a unit grid, the algorithm computes for each disk in $\DS$ a
    grid point that it contains, and register the disk with this
    point. Using hashing this can be done in $O(n)$ time overall. For
    a grid point $\p$, let $\listX{\p}$ be the list of disks that are
    registered with it.  Let $\p_1, \ldots \p_\nT$ be the points with
    non-empty lists.
    
    For a point $\p_i$, the graph $\IGraphX{\DS \cap \p_i}$ is the
    \emphi{tower} of $\p_i$. Consider a maximum matching $\MOpt$ of
    $\IGraphX{\DS}$. An edge $uv \in \MOpt$ is a \emphi{cross edge} if
    $u$ and $v$ belong to two different towers. Observe that if there
    are two cross edges between two towers in a matching, then we can
    exchange them by two edges internals to the two towers, preserving
    the size of the matching -- this observation is due to Bonnet
    \etal \cite{bcm-mmgig-20}. As such, we can assume that there is at
    most one cross edge between any two towers in the maximum matching
    $\MOpt$. In addition, any tower can have edges only with towers in
    its neighborhood -- specifically, two towers might have an edge
    between them, if the distance between their centers is at most
    $4$.  As such, the number of cross edges in $\MOpt$ is at most
    $24\nT = 48\nT/2$, as each tower interacts with at most $48$ other
    towers, see \figref{towers}.

    \begin{figure}
        \phantom{}%
        \hfill%
        \includegraphics[page=1]{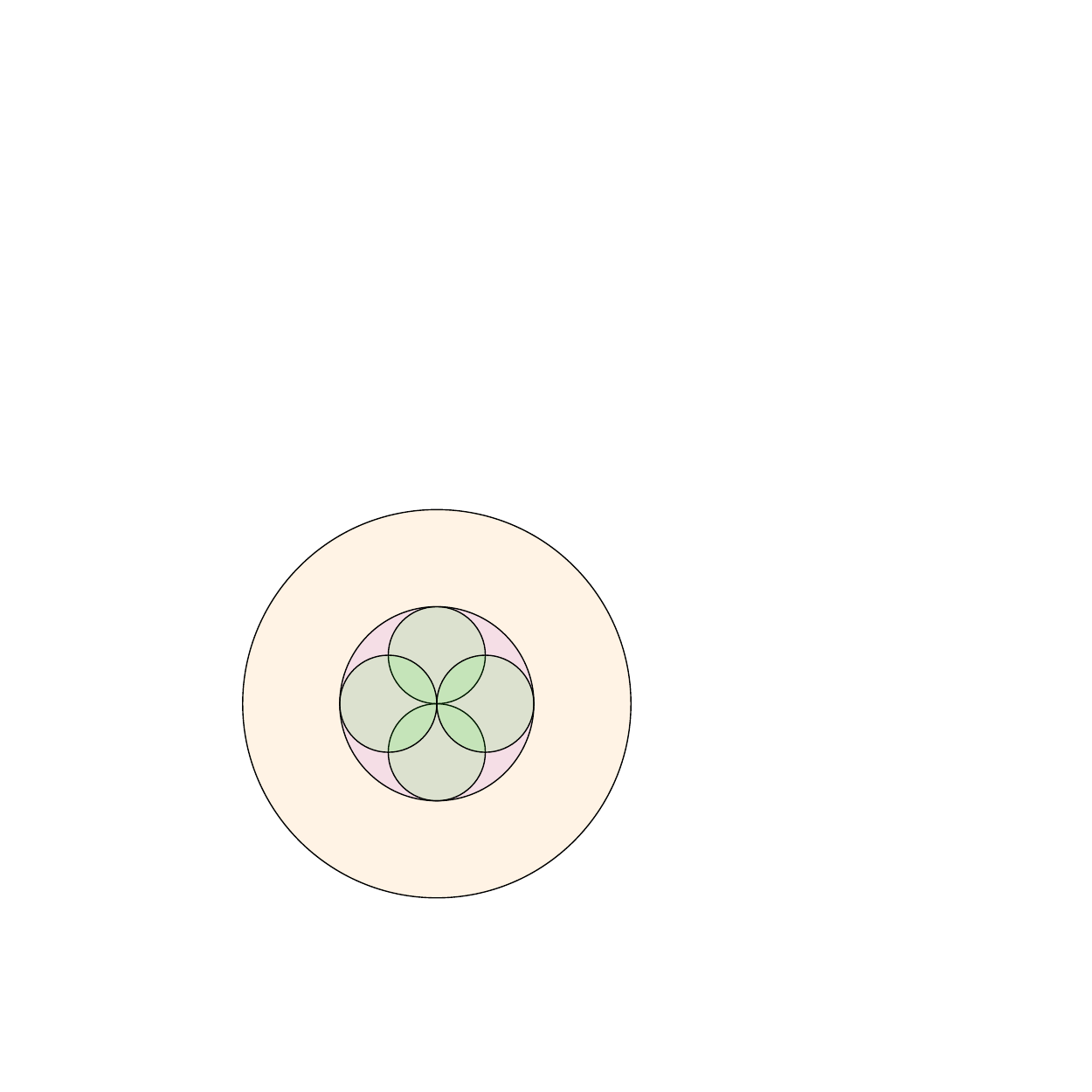}%
        \hfill%
        \includegraphics[page=2]{figs/tower}%
        \hfill%
        \phantom{}%
        \caption{A tower can interact with at most 48 other towers.}
        \figlab{towers}
    \end{figure}

    If a tower has more than (say) $200/\eps$ disks in it, then we add
    the greedy matching in the tower to the output, and remove all the
    disks in the tower. This yields a matching of size at least
    $100/\eps$, that destroys at most $48$ additional edges from the
    optimal matching. Thus, it is sufficient to compute the
    approximate matching in the residual graph. We repeat this process
    till all the remaining towers have at most $O(1/\eps)$ disks in
    them. Let $\DS'$ be the remaining set of disks -- by the bounded
    depth, each disk intersects at most $O(1/\eps)$ other disks, and
    the intersection graph of $\G = \IGraphX{\DS'}$ has
    $|\EGX{\G}| = O(n/\eps)$ edges, and can be computed in $O(n/\eps)$
    time.  Using the algorithm of Duan and Pettie \cite{dp-ltamw-14}
    on $\IGraphX{\DS'}$, computing a $(1-\eps/2)$-approximate matching
    takes
    $O(|\EGX{\G}|\eps^{-1} \log \eps^{-1}) = O(n\eps^{-2} \log
    \eps^{-1})$ time. It is straightforward to verify that this
    matching, together with the greedy matching of the ``tall'' towers
    yields that desired $(1-\eps)$-approximation.

    As for the running for the case that the diameter $\Delta$ is
    relatively small -- observe that after the cleanup step, there are
    at most $O(\Delta^2)$ towers, and each tower contains $O(1/\eps)$
    disks (each disk intersects $O(1/\eps)$ disks). As such, the
    residual graph has at most $O(\Delta^2/\eps^2)$ edges, and the
    running time of the Duan and Pettie \cite{dp-ltamw-14} algorithm
    is $O\bigl( (\Delta^2/\eps^3)\log(1/\eps) \bigr) $.
\end{proof}

Using a reduction of Bonnet \etal \cite{bcm-mmgig-20}, we can reduce the
residual graph even further, resulting in a slightly faster algorithm.

\begin{lemma}
    \lemlab{m:b:diam:unit:disks:2}%
    Let $\DS$ be a set of $n$ unit disks, and let $\eps \in (0,1)$ be
    a parameter. One can compute, in
    $O\bigl((n/\eps) \log (1/\eps)\bigr)$ time, an $(1-\eps)$-matching
    in $\IGraphX{\DS}$.
\end{lemma}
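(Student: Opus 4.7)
The plan is to reuse the tower-based framework of \lemref{m:b:diam:unit:disks} almost verbatim, but to plug in a smaller residual graph via a reduction of Bonnet \etal \cite{bcm-mmgig-20} before invoking the algorithm of Duan and Pettie \cite{dp-ltamw-14}. The bottleneck in the previous argument was that the residual intersection graph carries $\Theta(n/\eps)$ edges, so Duan--Pettie already costs $\Theta((n/\eps^{2})\log(1/\eps))$ time; to save a factor of $\eps$, the residual graph must be compressed down to $O(n)$ edges while preserving the value of its maximum matching up to an $\eps$-fraction loss.

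First I would construct the unit grid, assign each disk to one grid point by hashing, and form the towers $\DS \sqcap \p_i$ exactly as in \lemref{m:b:diam:unit:disks}. For every tower containing more than $c/\eps$ disks I would extract a greedy internal matching and discard the tower; this costs at most an $O(\eps)$-fraction of $\kopt$, while ensuring that each surviving tower has at most $O(1/\eps)$ disks and interacts with at most $48$ other towers. At this point, the per-tower cliques alone contribute $\Theta(1/\eps^{2})$ internal edges each, and it is precisely these internal edges that must be eliminated.

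The new ingredient is the Bonnet \etal compression: leveraging the fact that any optimal matching may be assumed to use at most one cross edge between two adjacent towers, I would replace each tower $T$ of size $k$ by a constant-sized gadget consisting of one ``port'' vertex per neighboring tower together with an indicator encoding the parity of $k$. Cross edges are represented by port-to-port edges between gadgets, while the internal matching within $T$ is handled inside its gadget, taking into account how many ports are occupied. Since each tower has $O(1)$ neighbors, its gadget has $O(1)$ vertices and edges, so the auxiliary graph has $O(\nT) = O(n)$ vertices and edges overall and can be built in $O(n/\eps)$ time. Running Duan--Pettie on this auxiliary graph with accuracy $\eps/2$ produces a $(1-\eps/2)$-approximate matching in $O((n/\eps)\log(1/\eps))$ time, which together with the matching harvested from the discarded tall towers lifts to a $(1-\eps)$-approximate matching of $\IGraphX{\DS}$ within the same time budget.

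The step I expect to be the main obstacle is verifying correctness of the gadget reduction: one needs a clean argument that every matching in the auxiliary graph can be realized inside the residual intersection graph, and conversely that the optimum residual matching can be routed through the gadgets with additive loss at most $O(\eps \kopt)$. The delicate point is the interplay between the parity of a tower and the several cross edges it may simultaneously contribute to distinct neighbors --- one has to check that ``spending'' a disk on a cross edge does not accidentally destroy an internal match that would otherwise have survived. Once this case analysis is settled, every other piece of the proof is inherited directly from \lemref{m:b:diam:unit:disks}.
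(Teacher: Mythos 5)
Your high-level plan matches the paper — compress the residual tower graph to $O(n)$ edges before calling Duan--Pettie — but the gadget construction you sketch has a genuine gap, and you omit the step that the paper uses to close precisely that gap.

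The issue is that a constant-sized gadget (one port per neighbor plus a parity bit) cannot encode the internal matching contribution of a short tower. After the tall-tower cleanup a surviving tower can still contain $k=\Theta(1/\eps)$ disks, and when $j$ of its ports are occupied it contributes $\lfloor (k-j)/2\rfloor = \Theta(1/\eps)$ internal matched pairs. A gadget with $O(1)$ vertices can participate in at most $O(1)$ matching edges, so the auxiliary graph's maximum matching undercounts the residual graph's by the total internal matching across towers, which is in general a constant fraction of $\kopt$. Running Duan--Pettie with accuracy $\eps/2$ on the auxiliary graph then gives a $(1-\eps/2)$-approximation to a quantity that may itself be, say, half of $\koptX{\G}$, and your proposed ``lift'' has no way to recover the missing internal pairs: you list only the tall-tower greedy matching and the auxiliary matching as the two pieces of the output.

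The paper avoids this by \emph{not} trying to encode internal matchings in the reduced graph at all. It keeps a genuine subset $\DS''$ of $O(\lambda^2)$ real disks per tower (selected by a two-round greedy marking scheme on cross edges), plus a one-disk parity fix per tower, and forms the honest intersection graph $\GA=\IGraphX{\DS''}$ with $O(n)$ edges. After approximating a matching $\CM_2$ in $\GA$, the paper runs a final greedy pass $\CM_3$ over $\G-\VX{\CM_2}$ tower by tower to sweep up the leftover internal pairs; the Bonnet \etal\ exchange argument, applied to the surplus counts, certifies that $\CM_1\cup\CM_2\cup\CM_3$ is near-optimal. If you want to push your gadget idea through you would have to (i) allow gadgets with $\Theta(1/\eps)$ vertices (defeating the edge-count savings), or (ii) add a post-processing greedy step like the paper's and prove an exchange argument showing the port/parity gadget decides the cross structure correctly. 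As written, the proposal does not establish the lemma.
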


\NotSoCGVer{%
\begin{proof}
    The algorithm works as follows:
    \begin{compactenumI}
        \smallskip%
        \item It performs tower reduction and computes a greedy
        matching $\CM_1$, as in \lemref{m:b:diam:unit:disks}, by doing
        greedy matching on each tower, till each tower contains
        $O(1/\eps)$ disks. Let $\DS'$ be the set of remaining disks.
        
        \smallskip%
        \item Computes the intersection graph $\G$ of $\DS'$.
        
        \smallskip%
        \item Extracts a subgraph $\GA$ of $\G$ by keeping only a few
        vertices from each tower.  More precisely, the algorithm
        performs the following for each pair of adjacent towers.

        The algorithm computes a greedy matching $\CMC$ between two
        adjacent towers, keeping at most $\lambda+1$ edges in this
        matching. All the edges in $\CMC$ are marked.  Next, the
        algorithm marks up to $\lambda+1$ edges (not in the partial
        matching) from each node in $\CMC$, to vertices in the
        adjacent tower.

        In the end, a disk is kept if it is contained in a marked
        edge. Let $\DS''$ be the resulting set of disks. The
        \emph{surplus} of a tower, defined by a point $\p$, is the
        difference between the number of disks in $\DS'$ that contains
        it, and the number of disks in $\DS''$ that contains it (i.e.,
        the number of disks the tower lost). If the surplus is odd,
        the algorithm adds an arbitrary vertex from the surplus to
        $\DS''$ (reducing the surplus by one, and making it even).
        Let $\GA$ be the intersection graph of $\DS''$.
        
        \smallskip%
        \item $(1-\eps/8)$-approximates a maximum matching $\CM_2$ in
        $\GA$ using the algorithm of Duan and Pettie
        \cite{dp-ltamw-14}.
        
        \smallskip%
        \item Computes a greedy matching $\CM_3$ in $\G - \VX{\CM_2}$,
        by computing a greedy matching for each tower.
        
        \smallskip%
        \item Returns the matching $\CM_1 \cup \CM_2 \cup \CM_3$.
    \end{compactenumI}
    \smallskip%
    Recall that every tower intersects at most $\lambda = O(1)$ other
    towers, see \figref{towers}.  A matching in $\G$, where there is
    at most one edge between two towers, is a \emphi{tower matching}.

    The key observation (due to Bonnet \etal \cite{bcm-mmgig-20}) is
    that for a maximum tower matching in $\G$, for each tower, the
    number of internal (to the tower) matching edges is at least half
    of the surplus value (i.e., all the surplus edges are ``used'' by
    internal matching edges). Thus, given such a tower matching in
    $\G$, one can perform an exchange argument, to get a maximum
    matching of the same size in $\GA$ (after adjusting for the
    surplus matching).  Indeed, one argues that any matching strictly
    between towers in $\G$ can be realized in $\GA$, and then it can
    be supplemented to a maximum matching back in $\G$ by adding
    matching of surplus edges.  See Bonnet \etal \cite{bcm-mmgig-20}
    for details. Thus, it is enough to compute a maximum matching in
    $\GA$ (or approximate it), and then do a greedy matching to get
    the desired approximation.
    
    The graph $\GA$ has $O(n \lambda^2) = O(n)$ edges. Thus, running
    the approximation algorithm of Duan and Pettie \cite{dp-ltamw-14}
    on $\GA$ takes $O\bigl( (n/\eps) \log(1/\eps) \bigr)$ time.
\end{proof}%
}

\begin{remark}
    \remlab{bounded:diam}%
    If the set of disks of $\DS$ has diameter $\Delta$, then the
    cleanup stage reduces the number of disks to
    $O(\Delta^2/\eps)$. Then one can involve the algorithm of
    \lemref{m:b:diam:unit:disks:2}. The resulting algorithm has
    running time $O\bigl(n + (\Delta^2/\eps^2)\log(1/\eps)\bigr)$.
\end{remark}

\begin{remark:unnumbered}
    The above algorithm can be modified to work in similar time for
    shapes of similar size -- the only non-trivial step is computing
    the intersection graph $\GA$. This involves taking all the shapes
    in a tower, and its neighboring towers, computing their
    arrangement, and extracting the intersection pairs. If this
    involves $\nu$ shapes, then this takes $O( \nu^2 )$ time. Each
    shape would be charged $O(\nu^2/\nu)$ amortized time, which
    results in $O(n/\eps)$ time, as $\nu = O(1/\eps)$. The rest of the
    algorithm remains the same.
\end{remark:unnumbered}
\subsection{Matching size estimation}
\seclab{u:estimate}

\begin{theorem}
    \thmlab{u:estimate}%
    Let $\DS$ be a set of $n$ unit disks, and let $\eps \in (0,1)$ be
    a parameter. One can output a number $Z$, such that
    $(1-\eps)\kopt \leq \Ex{Z}$ and
    $\Prob{Z < (1+\eps)\kopt} \geq 1- 1/n^{O(1)}$, where $\kopt$ is
    the size of the maximum matching in $\IGraphX{\DS}$, where
    $\IGraphX{\DS}$ is the intersection graph of $\DS$. The running
    time of the algorithm is
    $O(n + \eps^{-6} \log \eps^{-1} \log^2 n )$.
\end{theorem}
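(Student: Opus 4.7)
The plan is to fuse the tower-based reductions of \lemref{m:b:diam:unit:disks} and \lemref{m:b:diam:unit:disks:2} with the importance-sampling framework of \lemref{importance:alg}: after an $O(n)$ preprocessing pass, reduce the problem to a residual graph of maximum degree $O(1/\eps)$, and then estimate the residual matching size using the $\Llen = O(\eps^{-2}\log^2 n)$ probes afforded by \remref{L:2}, each of which costs $O(\eps^{-4}\log \eps^{-1})$ time. This yields the claimed total of $O(n + \eps^{-6}\log\eps^{-1}\log^2 n)$.

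The preprocessing step computes the unit-grid tower decomposition in $O(n)$ time, producing the non-empty grid points $\p_1,\ldots,\p_\nT$ with disk counts $n_i = |\listX{\p_i}|$. For each \emph{tall} tower (say $n_i > c/\eps$ for a suitable constant $c$), add $\lfloor n_i/2\rfloor$ to a running counter $M_1$ and mark its disks as removed; by the exchange argument of \lemref{m:b:diam:unit:disks} this costs an additive $O(\eps)\kopt$ against the optimum. Let $\G'$ be the residual intersection graph; every surviving tower has at most $c/\eps$ disks, so $\G'$ has maximum degree $O(1/\eps)$. Also identify and discard, in $O(n)$ time, disks isolated in $\G'$ (they contribute nothing), and greedily pair off each single-disk tower with an arbitrary free neighbor tower when possible -- adding these pairings directly to $M_1$ avoids the degenerate case in which the coarse estimate for a $1$-disk tower falls outside a constant factor of its true contribution.

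For each surviving tower $T_i$, form an importance-sampling triple $(T_i, 1, e_i)$ with coarse estimate $e_i = \max(1, n_i)$, computable in $O(1)$ per tower. Define $\wX{T_i}$ to be half the number of disks of $T_i$ saturated when one runs the $(1-\eps/C)$-approximation of \lemref{m:b:diam:unit:disks:2} on the $O(1/\eps)$-grid-neighborhood of $\p_i$, a local window containing $O(1/\eps^3)$ disks. Because \lemref{augmenting} shows that every matching within $\eps$ of optimum is reachable by augmenting paths of length $O(1/\eps)$, and because $\G'$ has maximum degree $O(1/\eps)$, such augmenting paths are confined to an $O(1/\eps)$-grid-neighborhood of either endpoint. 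It follows that $\Gamma := \sum_i \wX{T_i}$ is a $(1\pm\eps)$-approximation to $\kopt(\G')$, and the same exchange argument yields $e_i/2 \leq \wX{T_i} \leq 2 e_i$ for every surviving tower (after the singleton handling above). This places us in the setting of \lemref{importance:alg} with $b=2$, $\epsA = \Theta(\eps)$, $\BadProb = 1/n^{O(1)}$, and $M = n$. For each of the $\Llen$ sampled towers, compute $\wX{T_i}$ by running \lemref{m:b:diam:unit:disks:2} on the $O(1/\eps^3)$-disk neighborhood in $O(\eps^{-4}\log\eps^{-1})$ time. Return $Z = M_1 + Z'$, where $Z'$ is the importance-sampling estimator of \lemref{importance:alg}; the expectation and high-probability tail guarantees follow from the two-sided concentration statement of \lemref{importance:alg}.

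The main obstacle is the locality claim: one has to verify that $\sum_i \wX{T_i}$ -- each summand coming from a purely local $(1-\eps)$-approximation -- converges to the \emph{global} $\kopt(\G')$ up to a $(1 \pm \eps)$ factor, and simultaneously that the coarse $e_i$ are within a constant factor of $\wX{T_i}$. Both rely on the bounded degree of $\G'$ and on the tower exchange argument of Bonnet \etal already used in \lemref{m:b:diam:unit:disks:2}; without these structural properties the local approximations need not sum cleanly to the global optimum, and singletons in particular require the explicit greedy-pairing preprocessing step to keep their coarse estimates inside the prescribed ratio.
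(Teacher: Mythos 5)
Your proposal takes a genuinely different route from the paper. The paper's proof uses a \emph{randomly shifted coarse grid} of side length $\Theta(1/\eps)$: a random offset makes each intersecting pair of unit disks fall inside a single coarse cell with probability $\geq 1-\eps/2$, so discarding boundary-crossing disks loses only an $O(\eps)$ fraction of the optimum in expectation, and each surviving coarse cell is an \emph{independent subproblem} whose maximum matching can be $(1-\eps)$-approximated directly via \remref{bounded:diam}. The importance-sampling triples are then cells (with coarse estimates from the greedy matching of \lemref{unit:disk:g}), and the sum $\sum_\cell \koptiX{\cell}$ provably lower-bounds $\kopt$ and has expectation $\geq (1-\eps/8)\kopt$. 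You instead work at \emph{unit-tower} resolution, reduce to a bounded-degree residual graph, and define per-tower weights $\wX{T_i}$ by running independent local approximations in $O(1/\eps)$-neighborhoods of each tower.

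This is where the gap is. Your $\wX{T_i}$ is half the number of disks of $T_i$ saturated by a \emph{separately computed} local matching on $T_i$'s neighborhood. These local matchings are not mutually consistent: tower $T_i$'s local run might match $u \in T_i$ to $v \in T_j$, while tower $T_j$'s local run matches $v$ to a different disk. Hence $\sum_i \wX{T_i}$ does not correspond to the edge count of any actual matching, and the claim that it $(1\pm\eps)$-approximates $\kopt(\G')$ needs a real proof. The argument you give -- that augmenting paths are short and confined to local windows -- shows that \emph{each local window contains} a near-optimal matching, but does not show that the per-tower saturation counts from independent local runs sum coherently to (roughly) twice the number of global matching edges. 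Establishing this would require a consistency argument in the style of local computation algorithms for matching (e.g.\ fixing a shared random ranking that makes local answers agree), which is absent. The paper sidesteps the issue entirely: its coarse cells are disjoint after the boundary-discard step, so the cell matchings really are a single global matching and the summation is exact, not merely local-approximate.

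A secondary issue: you set $e_i = \max(1, n_i)$ and claim $e_i/2 \leq \wX{T_i} \leq 2e_i$. Since $\wX{T_i} \leq n_i/2$ by definition, the upper bound is trivial but the lower bound $e_i/2 = n_i/2 \leq \wX{T_i}$ fails whenever fewer than all $T_i$'s disks are saturated; you need $e_i \approx n_i/2$, and even then the lower bound requires showing almost every disk of a multi-disk tower is saturated in the local approximate matching (plausible since a tower is a clique, but again unargued). The singleton/greedy-pairing preprocessing addresses one corner case, not this general issue. I recommend following the paper's shifted-coarse-grid decomposition: it cleanly converts the estimation problem into a sum over disjoint subproblems, making both the ratio bound and the summation step immediate.
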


\begin{proof}
    We randomly shift a grid of size length $\psi = \ceil{32/\eps}$
    over the plane, by choosing a random point $\p \in
    [0,\psi]^2$. Formally, the $(i,j)$\th cell in this grid is
    $\p + (i \psi, j \psi) + [0,\psi]^2$, where $i,j$ are
    integers. For a pair of unit disks that intersect, with
    probability $\geq 1-\eps/2$ they both fall into the interior of a
    single grid cell. As such, throwing away all the disks that
    intersect the boundaries of the shifted grid, the remaining set of
    disks $\DS'$, in expectation, has a matching of size at least
    $(1-\eps/8) \kopt$.
    
    For a grid cell $\cell$ in this shifted grid, let $\DS_\cell$ be
    the set of disks of $\DS$ that are fully contained in $\cell$.  A
    grid cell $\cell$ is \emphw{active} if $\DS_\cell$ is not empty.
    Let $\AGC$ be the set of active grid cells. For a cell
    $\cell \in \AGC$, let $\koptiX{\cell}$ be the size of the maximum
    matching in $\IGraphX{\DS_\cell}$.  Using the algorithm of
    \lemref{unit:disk:g}, compute in $O(n)$ time overall, for all
    $\cell \in \AGC$, a number $e_\cell$ such that
    $\koptiX{\cell} /2 \leq e_\cell \leq \koptiX{\cell}$.

    The task at hand is to estimate the sum
    $\sigma = \sum_{\cell \in \AGC} \koptiX{\cell}$, where
    $\sigma \leq \kopt$ and $\Ex{\sigma} \geq (1-\eps/8)\kopt$. To
    this end, we use importance sampling to reduce the number of terms
    in the summation of $\sigma$ that need to be evaluated.  Each term
    $\koptiX{\cell}$ is $1/2$-approximated by $e_\cell$, and thus
    applying the algorithm of \lemref{importance:alg}, to these
    approximation, with $\epsA = \eps/32$, $b=2$, $M=n$, and
    $\BadProb=1/n^{10}$, we get that
    \begin{equation*}
        t =%
        O\pth{ b^4 \epsA^{-2} (\log \log M + \log \BadProb^{-1}) \log
           M }%
        =%
        O( \eps^{-2} \log^2 n )        
    \end{equation*}
    terms need to be evaluated (exactly if possible, but a
    $(1-\eps/16)$-approximation is sufficient) to get $1\pm \eps/8$
    estimate for $\sigma$. For each such cell, we apply the algorithm
    of \remref{bounded:diam}, to get $(1-\eps/16)$-approximation. For
    a cell $\cell$ this takes
    $O\bigl( |\DS_\cell| + (1/\eps^4) \log (1/\eps) \bigr)$
    time. Summing over all these $t$ cells, the running time is
    $O\bigl(n+ (t/\eps^4) \log(1/\eps)\bigr)$.
\end{proof}

\section{Approximate maximum matching for
   general disks}

\subsection{The greedy algorithm}

The following $1/2$-approximation algorithm works (with the same
running time) for any simply connected shapes that are well-behaved.

\begin{lemma}
    \lemlab{greedy:g}%
    Let $\DS$ be a set of $n$ disks in the plane. One can compute a
    greedy matching $\CM$ for $\IGraphX{\DS}$ in $O(n \log n)$
    time. This matching $\CM$ is a $1/2$-approximation -- that is,
    $\cardin{\CM} \geq \kopt/2$, where $\kopt$ is the size of the
    maximum cardinality matching in $\IGraphX{\DS}$.
\end{lemma}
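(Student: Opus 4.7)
The plan is a left-to-right plane sweep in $x$. For each disk $d$ with center $(x_d, y_d)$ and radius $r_d$, generate an insert event at $x_d - r_d$ and a delete event at $x_d + r_d$, sort the $2n$ events in $O(n \log n)$ time, and maintain a dynamic data structure $\mathcal{A}$ of the currently \emph{active} and still-unmatched disks. When the insert event of $d$ is processed, query $\mathcal{A}$ for any stored disk $d'$ that intersects $d$; if such a $d'$ is returned, emit $\{d,d'\}$ into the greedy matching $\CM$ and delete $d'$ from $\mathcal{A}$ (and do \emph{not} insert $d$); otherwise insert $d$ into $\mathcal{A}$. At a delete event for $d$, if $d$ is still in $\mathcal{A}$, remove it.

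The query reduces to an \emph{additively-weighted nearest-neighbor} query: $d$ and $d'$ intersect iff $\|c_d - c_{d'}\| - r_{d'} \le r_d$, so we treat each stored disk $d'$ as the point $c_{d'}$ with additive weight $-r_{d'}$, and ask whether the minimum of $\|c_d - c_{d'}\| - r_{d'}$ over stored disks is at most $r_d$. Using a dynamic additively-weighted Voronoi-type data structure (along the lines of Kaplan \etal \cite{kmrss-dpvdg-20}), each insert/delete/query takes polylogarithmic time; a tailored implementation exploiting the semi-dynamic nature --- each disk is inserted and deleted at most once --- brings this down to $O(\log n)$ amortized per operation, yielding $O(n \log n)$ total.

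For correctness, the invariant is that $\mathcal{A}$ holds exactly the active unmatched disks. If at termination two unmatched disks $d,d'$ intersected, consider the later of their two insert events, say that of $d'$: at that instant $d$ was in $\mathcal{A}$, so the query would have returned an intersecting disk, causing $d'$ to be matched --- a contradiction. Hence the unmatched disks form an independent set, so $\CM$ is a maximal matching in $\IGraphX{\DS}$, and every maximal matching is a $1/2$-approximation to a maximum matching, giving $\cardin{\CM} \ge \kopt/2$.

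The main obstacle is precisely this dynamic intersection-query structure: for unit disks a grid sufficed in \secref{u:greedy}, but with disks of wildly varying radii a grid no longer captures intersections (neither a univariate search tree on centers nor one on $y$-intervals at the sweep line faithfully tracks intersection). A genuine additively-weighted nearest-neighbor structure must be invoked or hand-rolled; once that piece is in place, the sweep logic and the maximality argument are both routine.
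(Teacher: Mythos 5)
Your sweep invariant and the maximality argument are correct, and the interval-overlap observation (intersecting disks have overlapping $x$-extents, so the later insert event catches the pair) is sound. However, there is a genuine gap in the running-time claim. You reduce the per-event query to dynamic additively-weighted nearest neighbor, cite Kaplan \etal \cite{kmrss-dpvdg-20}, and then assert without justification that ``a tailored implementation exploiting the semi-dynamic nature'' brings each operation down to $O(\log n)$ amortized. No such bound is known; the Kaplan \etal structure gives polylogarithmic (roughly $O(\log^{11} n)$) time per update, and having each disk inserted and deleted at most once does not obviously help, since the update sequence is still online and fully dynamic. As written, your approach yields $O(n\,\mathrm{polylog}\, n)$, not the claimed $O(n\log n)$, so the lemma's time bound is not established.

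The paper takes a more elementary route that you dismissed a bit too quickly. It runs a Bentley--Ottmann sweep over the \emph{circle boundaries}: the $y$-structure is a balanced tree of the arcs currently crossing the sweep line, ordered by $y$. When a new circle's leftmost point is inserted, adjacency checks in the $y$-structure detect both boundary crossings and containment of the new leftmost point in a neighboring disk. The crucial point you did not exploit is that every detected intersection immediately deletes both disks from the structure and the event queue, so at most $O(n)$ intersection events are ever processed; together with the $2n$ endpoint events, each costing $O(\log n)$, the total is $O(n\log n)$ with only a plain balanced search tree. This also generalizes verbatim to well-behaved shapes (constant-complexity boundaries), whereas the additively-weighted Voronoi reduction is specific to disks.
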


\NotSoCGVer{%
\begin{proof}
    The greedy algorithm repeatedly discovers a pair of disks that
    intersect, add them to the matching, and delete them from
    $\DS$. Naively implemented, this requires quadratic time.
    Instead, we use the standard sweeping algorithm for computing the
    arrangement of circles (i.e., the boundaries of the disks).  The
    algorithm performs the sweeping of the plane by a vertical line
    from left to right. Here, as soon as the sweeping algorithm
    discovers an intersection, it deletes the two disks involved in
    the intersection, and reports this pair as an edge in the
    matching. Naturally, the algorithm deletes the disks from the
    $y$-structure and the sweeping queue. An intersection takes
    $O( \log n)$ time to handle, and there can be at most $O( n)$
    intersections before the algorithm terminates. All other events
    takes $O( n \log n)$ time to handle overall.
\end{proof}
}

\subsection{Approximation algorithm when the graph is sparse}
\seclab{sparse}

\begin{lemma}
    \lemlab{plug:and:play}
    Let $\DS$ be a set of $n$ disks in the plane such that the
    intersection graph $\IGraphX{\DS}$ has $m$ edges. For a parameter
    $\eps \in (0,1)$, one can compute, in
    $O\bigl( n \log n + (m/\eps) \log (1/\eps) \bigr)$ time, an
    $(1-\eps)$-matching in $\IGraphX{\DS}$.
\end{lemma}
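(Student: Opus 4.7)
The plan is to reduce the lemma to a direct invocation of a known near-linear approximate matching algorithm for general graphs, applied to the explicitly constructed intersection graph. There are two self-contained steps.

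First, I would build $\G = \IGraphX{\DS}$ as an adjacency list in $O(n \log n + m)$ time. The sweep used in \lemref{greedy:g} already discovers every crossing of boundary circles; the only modification is that we no longer halt or delete the participating disks upon finding an intersection — instead we simply register the pair as an edge of $\G$ and continue the sweep. A standard radius-bucketing refinement (partition the disks into $O(\log n)$ radius classes and, for each pair of classes, use a grid whose cell side matches the larger radius scale) ensures amortized $O(1)$ reporting cost per intersecting pair after $O(n \log n)$ preprocessing, so step~1 indeed runs in $O(n \log n + m)$ time.

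Second, once $\G$ is in hand, I would feed it directly to the Duan and Pettie $(1-\eps)$-approximation algorithm \cite{dp-ltamw-14}, which returns a $(1-\eps)$-approximate maximum cardinality matching on a graph with $m$ edges in time $O\bigl((m/\eps)\log(1/\eps)\bigr)$. Summing the two phases yields
\begin{equation*}
    O\bigl(n \log n + m + (m/\eps)\log(1/\eps)\bigr)
    \;=\;
    O\bigl(n \log n + (m/\eps)\log(1/\eps)\bigr),
\end{equation*}
matching the lemma's bound.

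The only non-trivial ingredient is step 1: one must ensure that the $m$ intersecting pairs can be enumerated without paying an extra $\log n$ factor per pair, since otherwise the $m \log n$ term would not be absorbed by $(m/\eps)\log(1/\eps)$ for general $\eps$ and $n$. The radius-bucketing scheme sketched above handles this obstacle in a standard way. Step 2 is purely a black-box call, with correctness, the $(1-\eps)$ approximation ratio, and the running-time accounting all inherited verbatim from \cite{dp-ltamw-14}.
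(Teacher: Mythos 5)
There is a genuine gap in step 1. The sweep from \lemref{greedy:g} finds intersections by detecting crossings of the \emph{boundary circles}, but the intersection graph $\IGraphX{\DS}$ also has edges for nested pairs: one disk entirely contained inside another has no boundary crossing at all, so a sweep that only registers arc-crossing events will silently miss all such edges. The paper's proof constructs the vertical decomposition of $\ArrX{\DS}$ by randomized incremental construction in expected $O(n\log n + m)$ time (this handles boundary-crossing pairs), and then makes a \emph{separate} pass -- a DFS over the dual graph of the decomposition, maintaining the set of disks containing the current trapezoid -- precisely to recover the containment edges. Your proposal has no analogue of this second pass, so the graph you feed to Duan and Pettie is missing edges, and the approximation guarantee does not follow.

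There is also a secondary issue with the running-time claim in step 1. Keeping every disk alive in the $y$-structure and reporting each crossing costs $\Theta(\log n)$ per event, giving $O((n+m)\log n)$, not $O(n\log n + m)$; you notice this and invoke a ``standard radius-bucketing refinement,'' but as sketched it does not achieve the claimed bound. With $O(\log n)$ radius classes and a grid at the larger scale per class pair, a single cell can hold $\Theta\bigl((R/r)^2\bigr)$ tiny disjoint disks of radius $r$, and a large disk of radius $R$ overlapping that cell may intersect none of them, so the candidate pairs examined are not charged to actual edges; the bookkeeping over all $O(\log^2 n)$ class pairs also does not obviously collapse to $O(n\log n)$. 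The paper sidesteps all of this by building the arrangement directly, where the total work is provably $O(n\log n + m)$ in expectation and containment is handled by the dual DFS. Your step 2 (the black-box call to Duan and Pettie) is fine and is exactly what the paper does.
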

\begin{proof}
    Computing the vertical decomposition of the arrangement
    $\ArrX{\DS}$ can be done in $O(n \log n +m )$ randomized time,
    using randomized incremental construction \cite{bcko-cgaa-08}, as
    the complexity of $\ArrX{\DS}$ is $O(n + m)$. This readily
    generates all the edges that arise out of pairs of disks with
    intersecting boundaries.

    The remaining edges are created by one disk being enclosed
    completely inside another disk. One can perform a \DFS on the dual
    graph of this arrangement, such that whenever visiting a
    trapezoid, the traversal maintains the set of disks that contains
    it. This takes time linear in the size of the arrangement, since
    the list of disks containing a point changes by at most one
    element between two adjacent faces. Now, whenever visiting a
    vertical trapezoid that on its non-empty vertical wall on the left
    contains an extreme right endpoint of a disk $\disk$, the
    algorithm reports all the disks that contains this face, as having
    an edge with $\disk$. Since every edge is generated at most $O(1)$
    times by this algorithm, it follows that its overall running time
    is $O(n \log n + m)$.

    Now that we computed the intersection graph, we apply the
    algorithm of Duan and Pettie \cite{dp-ltamw-14}. This takes
    $O\bigl((m/\eps) \log(1/\eps) \bigr)$ time, and computes the
    desired matchings.
\end{proof}

The above is sufficient if the intersection graph is sparse, as is the
case if the graph is low density.

\begin{lemma}
    \lemlab{low:density}
    Let $\DS$ be a set of $n$ disks in the plane with density
    $\cDensity$. For a parameter $\eps \in (0,1)$, one can
    $(1-\eps)$-approximate the maximum matching in $\IGraphX{\DS}$ in
    $O\bigl( n \log n + \frac{n \cDensity}{\eps} \log
    \frac{1}{\eps}\bigr)$ time.
\end{lemma}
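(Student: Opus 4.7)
The plan is to derive this as an essentially immediate corollary of \lemref{plug:and:play}: it suffices to show that the intersection graph $\IGraphX{\DS}$ of a disk set of density $\cDensity$ has $m = O(n \cDensity)$ edges, and then plug this bound into the running time of \lemref{plug:and:play}.

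First I would bound the number of edges. Orient each edge $\disk \disk'$ of $\IGraphX{\DS}$ from the smaller disk to the larger one (breaking ties arbitrarily). By \defref{low:density}, for any fixed disk $\disk$ the number of disks in $\DS$ of diameter at least $\diamX{\disk}$ intersecting it is at most $\cDensity$. Hence the out-degree of each vertex in this orientation is at most $\cDensity$, and so $m \leq n \cDensity$.

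Then I would invoke \lemref{plug:and:play} directly: it computes the intersection graph and a $(1-\eps)$-approximate matching in total time
\[
O\bigl( n \log n + (m/\eps) \log (1/\eps) \bigr)
= O\Bigl( n \log n + \tfrac{n \cDensity}{\eps} \log \tfrac{1}{\eps} \Bigr),
\]
which is exactly the claimed bound. There is no real obstacle here — the edge-count bound is standard (noted already in the introduction), and \lemref{plug:and:play} is the workhorse; the present statement is simply the specialization of that result under a density assumption.
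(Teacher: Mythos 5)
Your proposal matches the paper's proof: both bound the number of edges of $\IGraphX{\DS}$ by $O(n\cDensity)$ via the density definition (the paper phrases it as repeatedly removing the smallest disk, which is the same as your smaller-to-larger orientation) and then invoke \lemref{plug:and:play}. This is correct and essentially identical to the paper's argument.
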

\begin{proof}
    The smallest disk in $\DS$ intersects at most $\cDensity$ other
    disks of $\DS$. Removing this disk and repeating this argument,
    implies that $\IGraphX{\DS}$ has at most $\cDensity n$ edges.  The result
    now readily follows from \lemref{plug:and:play}.
\end{proof}

\subsection{The bipartite case}
\seclab{g:bipartite}

Consider computing maximum matching when given two sets of disks
$\DS_1, \DS_2$, where one considers only intersections between disks
that belong to different sets -- that is the bipartite case.  Efrat
\etal \cite{eik-ghbmr-01} showed how to implement one round of
Hopcroft-Karp algorithm using $O(n)$ dynamic range searching
operations on a set of disks.  Using the (recent) data-structure of
Kaplan \etal \cite{kmrss-dpvdg-20}, one can implement this algorithm.
Each operation on the dynamic disks data-structure takes
$O( \log^{11} n )$ time. If our purpose is to get an
$(1-\eps)$-approximation, we need to run this algorithm $O(1/\eps)$
times, so that all paths of length $O(1/\eps)$ get augmented,
resulting in the following.

\begin{lemma}
    \lemlab{bipartite}%
    Given sets $\DS_1, \DS_2$ at most $n$ disks in he plane, one can
    $(1-\eps)$-approximate the maximum matching in the bipartite graph
    \begin{equation*}
        \IGraphX{\DS_1, \DS_2}%
        =%
        (\DS_1 \cup \DS_2, %
        \Set{ \disk_1 \disk_2}%
        {\disk_1 \in \DS_1, \disk_2 \in \DS_2,
           \text{ and } \disk_1\cap \disk_2 \neq \emptyset}).        
    \end{equation*}
    in $O\bigl((n/\eps) \log^{11} n \bigr)$ time. Any augmenting path
    for this matching has length at least $4/\eps$.
\end{lemma}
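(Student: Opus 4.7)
The plan is to run the classical Hopcroft--Karp bipartite matching algorithm for $O(1/\eps)$ phases, where each phase is implemented implicitly via a dynamic data structure on disks rather than on an explicit intersection graph (which could have $\Theta(n^2)$ edges). Specifically, I would combine the reduction of Efrat \etal \cite{eik-ghbmr-01} that performs a Hopcroft--Karp phase using $O(n)$ operations on a dynamic disk-intersection structure with the data structure of Kaplan \etal \cite{kmrss-dpvdg-20} that supports each such operation in $O(\log^{11} n)$ time.

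First, I would recall the standard Hopcroft--Karp analysis: after $k$ phases, every remaining augmenting path has length strictly greater than $2k+1$. Thus after $\ceil{2/\eps}$ phases the shortest augmenting path has length at least $4/\eps$, which (by the bipartite specialization of \lemref{augmenting}) immediately yields a $(1-\eps)$-approximate matching and also gives the length lower bound claimed in the lemma.

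Next, I would describe how a single phase is executed. A Hopcroft--Karp phase consists of a layered BFS from the currently free vertices followed by a DFS that extracts a maximal set of vertex-disjoint shortest augmenting paths. Following Efrat \etal, both traversals can be done by maintaining the unexplored vertices of each side as a dynamic set of disks; to advance the frontier from $\DS_1$ to $\DS_2$ one repeatedly queries, for each frontier disk $\disk$, the dynamic set of unexplored disks in $\DS_2$ for any disk intersecting $\disk$, deletes it, and assigns it to the next layer (and symmetrically on the other side). Since each vertex is inserted and deleted at most once, a phase uses $O(n)$ insertions, deletions, and intersection queries.

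Finally, plugging the Kaplan \etal structure into this template gives $O(n \log^{11} n)$ time per phase and therefore $O\bigl((n/\eps)\log^{11} n\bigr)$ total, as required. The main subtle point I expect is verifying that the operations used by the Efrat \etal reduction--dynamic insertion, deletion, and ``find a disk in the current set that intersects a given disk''--are all supported by \cite{kmrss-dpvdg-20} within their stated bound; this is where I would have to be careful, since the data structure is usually stated in terms of nearest neighbor / closest-pair queries under additively weighted distances, and one has to observe that two disks intersect iff the additively weighted distance between their centers is nonpositive, so an intersection query reduces to a nearest-neighbor query.
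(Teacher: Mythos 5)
Your proposal is correct and follows essentially the same route as the paper: run $O(1/\eps)$ Hopcroft--Karp phases, implementing each phase via the Efrat \etal reduction to $O(n)$ dynamic disk-intersection operations, and support each operation in $O(\log^{11} n)$ time with the Kaplan \etal structure. The additional details you supply---the layered BFS/DFS description, the standard $2k+1$ length bound after $k$ phases, and the observation that disk intersection reduces to additively weighted nearest-neighbor---are exactly the points the paper leaves implicit.
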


\subsection{Approximate matching via reduction to %
   the bipartite case}
\seclab{a:g:matching:r:c}

We use a reduction, due to Lotker \etal \cite{lpp-idam-15}, of
approximate general matchings to the bipartite case.

\subsubsection{The Algorithm}
\seclab{colorCodeAlg}

The input is a set $\DS$ of $n$ disks, and a parameter
$\eps \in (0,1)$. The algorithm maintains a matching $\CM$ in
$\IGraphX{\DS}$. Initially, this matching can be the greedy matching.
Now, the algorithm repeats the following $O( \cEps \log n)$ times,
where $\cEps = 2^{8/\eps}$:

\begin{description}
    \item \textbf{$i$\th iteration}: Randomly color the disks of $\DS$
    by two colors (say $1$ and $2$), and let $\DS_1, \DS_2$ be the
    resulting partition. Remove from $\DS_1$ any pair of disks
    $\disk_1, \disk_2$ such that $\disk_1\disk_2$ is in the current
    matching $\CM$. Do the same to $\DS_2$. Let $\CM_i'$ be edges of
    $\CM$ that appear in $\GA_i = \IGraphX{\DS_1, \DS_2}$. Using
    \lemref{bipartite}, find an $(1-\eps/16)$-approximate maximum
    matching in $\GA_i$, and let $\CM_i''$ be this matching. Augment
    $\CM$ with the augmenting paths in $\CM_i' \oplus \CM_i''$.
\end{description}

The intuition behind this algorithm is that this process would compute
all the augmenting paths of $\CM$ of length (say) $\leq 4/\eps$, which
implies that the resulting matching is the desired approximation.

\subsubsection{Analysis}
\begin{lemma}
    \lemlab{aug:rounds}%
    The above algorithm outputs a matching of size
    $\geq (1-\eps)\kopt$, with probability $\geq 1-1/n^{O(1)}$.
\end{lemma}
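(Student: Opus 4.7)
The plan is to show that over the $T = \cEps \log n = 2^{8/\eps}\log n$ iterations of the algorithm, the matching $\CM$ reaches size at least $(1-\eps)\kopt$ with probability $1 - 1/n^{O(1)}$; the argument combines \lemref{augmenting} to guarantee short augmenting paths whenever $|\CM|$ is below the target, a survival analysis under the random 2-coloring, and a Chernoff concentration over the $T$ independent colorings.

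Whenever $|\CM| < (1-\eps)\kopt$, \lemref{augmenting} supplies at least $(\eps/2)\kopt$ vertex-disjoint augmenting paths of $\CM$, each of length at most $L := 4/\eps$. For any such path $P = v_0 v_1 \cdots v_\ell$, the random 2-coloring assigns colors independently to the $v_i$'s, and the event that these colors alternate along $P$ occurs with probability $2 \cdot 2^{-(\ell+1)} = 2^{-\ell} \geq 2^{-L}$. Conditional on this event, every edge of $P$ is a cross-edge and lies in $\GA_i$; the $\CM$-matched edges of $P$ survive into $\CM_i' = \CM \cap \EGX{\GA_i}$; and the free endpoints $v_0, v_\ell$ remain free in $\CM_i'$. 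Hence $P$ is an augmenting path of $\CM_i'$ in $\GA_i$ of length at most $L$, so the maximum matching in $\GA_i$ has size at least $|\CM_i'|+1$.

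By \lemref{bipartite} the matching $\CM_i''$ admits no augmenting path in $\GA_i$ of length below $4/(\eps/16) = 64/\eps$; since $L \ll 64/\eps$, a warm-started, monotone Hopcroft--Karp-style implementation of the bipartite oracle cannot halt without augmenting along at least one surviving short augmenting path, so $|\CM_i''| \geq |\CM_i'| + 1$. Because $\CM_i' = \CM \cap \EGX{\GA_i}$ and a vertex of $\GA_i$ is free in $\CM_i'$ iff it is free in $\CM$, every augmenting path of $\CM_i'$ in the symmetric difference $\CM_i' \oplus \CM_i''$ lifts verbatim to an augmenting path of $\CM$ in $\G$; augmenting $\CM$ along them raises $|\CM|$ by at least $|\CM_i''| - |\CM_i'| \geq 1$.

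To amplify, I would view the $T$ rounds as Bernoulli trials with success probability at least $2^{-L}$ for a strict increase of $|\CM|$ (while $|\CM|$ is below the target). A Chernoff bound then gives $\Omega(T \cdot 2^{-L}) = \Omega(\sqrt{\cEps}\log n)$ successes with probability at least $1 - 1/n^{O(1)}$, which, combined with the stronger per-iteration gains when $\kopt$ is large (the expected number of surviving short augmenting paths scales like $\Omega(\eps \kopt 2^{-L})$, yielding commensurate per-iteration growth in $|\CM|$), suffices to drive $|\CM|$ past $(1-\eps)\kopt$. The main obstacle is converting the $(1-\eps/16)$-approximation guarantee into the strict progress $|\CM_i''| > |\CM_i'|$; the plan is to rely on the monotone, warm-started implementation of the bipartite oracle, so that any surviving short augmenting path necessarily triggers at least one augmentation before the approximation algorithm halts.
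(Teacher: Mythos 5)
Your high-level intuition is right---random 2-coloring makes each short augmenting path fully alternating with probability at least $2^{-4/\eps}$, and the oracle then cannot leave it intact---but the accounting in the last step has a real gap. You count ``iterations with at least one augmentation'' and apply a Chernoff bound to get $\Omega(2^{4/\eps}\log n)$ successful iterations; this drives $|\CM|$ up to $(1-\eps)\kopt$ only when $\kopt = O(2^{4/\eps}\log n)$. For larger $\kopt$ you would need per-iteration gains on the order of $\eps^2\kopt$, which you assert (``commensurate per-iteration growth'') but do not prove. And it cannot be proved by a naive Chernoff-over-iterations argument: the set of short augmenting paths for $\CM$ changes after every iteration, the per-iteration gains are neither i.i.d.\ nor bounded, and converting ``alternately-colored paths'' into ``augmentations'' loses a $\Theta(1/\eps)$ factor because each augmentation the oracle performs is a path of length up to $O(1/\eps)$ and can wipe out that many disjoint short paths at once.

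The paper's proof supplies exactly the missing structure. It groups iterations into \emph{epochs} of $\gamma = 4\cdot 2^{4/\eps}$ rounds. At the start of epoch $i$ a maximum family $T_i$ of vertex-disjoint short augmenting paths (of size $\geq(\eps/2)\kopt$ by \lemref{augmenting}) is frozen, and the analysis tracks the \emph{destruction} of paths in $T_i$ within the epoch rather than the growth of $|\CM|$ directly. Each fixed path survives the whole epoch with probability at most $(1-2^{-4/\eps})^{\gamma}\leq e^{-4}\leq 1/50$, so by Markov's inequality the epoch destroys at least half of $T_i$ with probability $\geq 24/25$; a successful epoch therefore yields $\Omega(\eps^2\kopt)$ augmentations, shrinking the deficit by a fixed fraction. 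Only at the epoch level is Chernoff invoked---epoch successes are genuinely independent constant-bias coin flips---and $O(1/\eps^2 + \log n)$ successful epochs suffice. This fix-a-target-set, argue-survival, concentrate-over-epochs decomposition is what your per-iteration argument lacks. A secondary issue: you need a ``warm-started, monotone'' oracle to force $|\CM_i''| > |\CM_i'|$, which \lemref{bipartite} as stated does not provide; the paper instead argues by contradiction from the oracle's no-short-augmenting-path guarantee (if $\pi$ were alternating and undestroyed it would still be a short augmenting path afterwards), which avoids committing to any internal behavior of the bipartite subroutine.
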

\NotSoCGVer{%
\begin{proof}
    Let $\MOpt$ denote the maximum cardinality matching in
    $\IGraphX{\DS}$ (thus, $\kopt = |\MOpt|$).  We group the
    iterations of the algorithm into epochs. An \emphi{epoch} is a
    consecutive blocks of $\gamma = 4 \cdot 2^{4/\eps}$ iterations.
    Let $\CM_i$ be the matching computed by the algorithm in the
    beginning of the $i$\th epoch.  Let $d_i = \kopt - |\CM_i|$ be the
    \emphw{deficit} from the optimal solution in the beginning of the
    $i$\th epoch. Initially $d_1 \leq \kopt/2$.  Let $T_i$ be a
    maximum cardinality set of augmenting paths for $\CM_i$, such that
    each path is of length at most
    \begin{equation*}
        \tau = 4/\eps.
    \end{equation*}
    By \lemref{augmenting}, we have $|T_i|\geq (\eps/2)\kopt$.

    Fix a specific path $\pi \in T_i$.  A random coloring by two
    colors, has probability $p \geq 1/2^\tau$ to color $\pi$ such that
    the colors of the disks are alternating. The path $\pi$ is
    \emph{destroyed} in an iteration during the epoch, if the
    algorithm augments along a path that intersects $\pi$.  If $\pi$
    is colored in alternating colors in, then it must have been
    destroyed (in this or earlier iteration) by the algorithm of
    \lemref{bipartite} as it extracts a set of augmenting paths, and
    after it is applied, all remaining augmenting paths have length
    (say) $\geq 8/\eps$ (which is longer then $\pi$).
    
    It follows that after $\gamma$ iterations
    in the epoch, the probability of a path of $T_i$ to survive is at
    most $(1-p)^{\gamma} = (1-p)^{4/p} \leq \exp(-4) \leq 1/50$. Let
    $Z_i$ be the number of paths of $T_i$ that survive the $i$\th
    epoch.  We have that $\Ex{Z_i}\leq |T_i|/50$.  The $i$\th epoch is
    \emphw{successful} if at least half the paths of $T_i$ are
    destroyed in this epoch. By Markov's inequality, we have
    \begin{equation*}
        \Prob{i\text{\th epoch is a failure}}
        =%
        \Prob{ Z_i > |T_i|/2}
        <%
        \frac{\Ex{Z_i}}{|T_i|/2}
        =%
        \frac{1}{25}.
    \end{equation*}
    If the $i$\th epoch is successful, then it computes at least
    $|T_i|/(2\tau) \geq(\eps^2/10)\kopt$ augmenting paths, which
    implies that $d_{i+1} \leq d_i - (\eps^2/10)\kopt$. In particular,
    the algorithm must reach the desired approximation after
    $10/\eps^2$ successful epochs. Since the success of the epochs are
    independent events, it follows, with high probability, that the
    algorithm must collect the desired number of successful epochs,
    after $O( \eps^{-2} + \log n)$ epochs -- this follows readily from
    Chernoff's inequality%
    \NotSoCGVer{, see \lemref{chernoff:long:tail}}.

    Finally, observe that
    $O(4 \cdot 2^{4/\eps} /\eps^2 ) = 2^{8/\eps}$.
\end{proof}
}

\subsubsection{The result}

\begin{theorem}
    \thmlab{a:matching:c:c}%
    Let $\DS$ be a set of $n$ disks in the plane, and $\eps \in (0,1)$
    be a parameter.  One can compute a matching in $\IGraphX{\DS}$ of
    size $\geq (1-\eps)\kopt$, in $O(2^{8/\eps} n \log^{12} n) $ time,
    where $\kopt$ is the cardinality of the maximum matching in
    $\IGraphX{\DS}$.  The algorithm succeeds with high probability.
\end{theorem}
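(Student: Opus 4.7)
The plan is to apply the algorithm of \secref{colorCodeAlg} directly, deriving correctness from \lemref{aug:rounds} and then bounding the running time of a single iteration and multiplying by the total number of iterations.

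For correctness, \lemref{aug:rounds} already certifies that after $O(\cEps \log n) = O(2^{8/\eps} \log n)$ iterations, the computed matching $\CM$ satisfies $|\CM| \geq (1-\eps)\kopt$ with probability $\geq 1 - 1/n^{O(1)}$. So the remaining task is purely a bookkeeping of per-iteration cost.

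For the running time of a single iteration $i$: randomly two-coloring the disks and splitting $\DS$ into $\DS_1, \DS_2$ takes $O(n)$ time; removing from each $\DS_j$ the endpoints of edges currently in $\CM$ that stay within $\DS_j$ is linear in $|\CM| = O(n)$; the construction of the bipartite subgraph $\GA_i = \IGraphX{\DS_1, \DS_2}$ is implicit (not materialized) since \lemref{bipartite} works directly from the disk representation; the call to \lemref{bipartite} to compute an $(1-\eps/16)$-approximate bipartite matching $\CM_i''$ costs $O\bigl((n/\eps) \log^{11} n\bigr)$ time; and computing the symmetric difference $\CM_i' \oplus \CM_i''$ and augmenting along its augmenting paths takes $O(n)$ time as each connected component in this symmetric difference is a path or cycle of linear total size. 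The dominant term per iteration is thus $O\bigl((n/\eps) \log^{11} n\bigr)$.

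Multiplying by the $O(2^{8/\eps} \log n)$ iterations, the total time is
\begin{equation*}
    O\bigl(2^{8/\eps} \log n\bigr) \cdot O\bigl((n/\eps) \log^{11} n\bigr)
    =
    O\bigl( 2^{8/\eps} (n/\eps) \log^{12} n \bigr)
    =
    O\bigl( 2^{8/\eps} n \log^{12} n \bigr),
\end{equation*}
where the last equality absorbs the $1/\eps$ factor into $2^{8/\eps}$ (since $1/\eps \leq 2^{8/\eps}$ for $\eps \in (0,1)$). The only minor subtlety to verify is that the Kaplan \etal data-structure of \lemref{bipartite} can be rebuilt at the start of every iteration without inflating the cost beyond $O(n \log^{11} n)$; this is immediate since its construction time matches the per-operation bound up to a logarithmic factor and is dominated by the $O(n/\eps)$ dynamic operations performed inside the bipartite matching call. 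No step is genuinely obstacle-bearing here; the work of the theorem lies in \lemref{aug:rounds}, and the present statement is the clean combination of that correctness analysis with the near-linear bipartite primitive.
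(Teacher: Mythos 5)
Your approach matches what the paper implicitly does: the theorem is just the algorithm of \secref{colorCodeAlg}, the correctness guarantee of \lemref{aug:rounds}, and the per-iteration cost of \lemref{bipartite} multiplied out. The one place your arithmetic is off is the final ``absorption'' step. Writing $O\bigl(2^{8/\eps}(n/\eps)\log^{12}n\bigr) = O\bigl(2^{8/\eps}n\log^{12}n\bigr)$ is \emph{not} justified by noting $1/\eps \leq 2^{8/\eps}$: absorbing a factor bounded by $2^{8/\eps}$ into a $2^{8/\eps}$ term would yield $2^{16/\eps}$, not $2^{8/\eps}$, and $1/\eps$ is certainly not $O(1)$. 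To actually land on the exponent $8/\eps$ claimed in the theorem, you should use the tighter iteration count from the proof of \lemref{aug:rounds}: the number of iterations is $O\bigl(\gamma(\eps^{-2}+\log n)\bigr)$ with $\gamma = 4\cdot 2^{4/\eps}$, rather than the looser $O(2^{8/\eps}\log n)$. Multiplying by the $O\bigl((n/\eps)\log^{11}n\bigr)$ per-iteration cost gives
\begin{equation*}
O\Bigl(2^{4/\eps}\bigl(\eps^{-3}+\eps^{-1}\log n\bigr)\, n\log^{11}n\Bigr),
\end{equation*}
and since $\eps^{-3}\leq 2^{4/\eps}$ and $\eps^{-1}\leq 2^{4/\eps}$ for $\eps\in(0,1)$, both terms are $O\bigl(2^{8/\eps}n\log^{12}n\bigr)$, which is the bound stated. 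The rest of your write-up (appeal to \lemref{aug:rounds} for correctness, the $O(n)$ bookkeeping per iteration, the observation that the Kaplan \etal\ data structure rebuild is dominated by the queries) is fine and indeed where the real content is not.
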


\begin{remark:unnumbered}
    Note, that the above algorithm does not work for fat shapes (even
    of similar size), since the range searching data-structure of
    Kaplan \etal \cite{kmrss-dpvdg-20} can not to be used for such
    shapes.
\end{remark:unnumbered}

\subsection{Algorithm for the case the maximum %
   matching is small}

If $\nCM = \cardin{\CM}$ is small (say, polylogarithmic), it turns out
that one can compute the \emph{maximum} matching exactly in near
linear time.
\begin{lemma}
    \lemlab{preprocess}%
    For a set $\SetA$ of $n$ disks, and any constant
    $\delta \in (0,1)$, one can preprocess $\SetA$, in
    $O(n^{3+\delta} \log n)$ time, such that given a query disk
    $\disk$, the algorithm outputs, in $O( \log n)$ time, a pointer to
    a (unique) list containing all the disks intersecting the query
    disk.
\end{lemma}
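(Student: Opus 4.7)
The plan is to lift the problem to a three-dimensional parameter space and build an arrangement of algebraic surfaces there. Associate with every query disk $\disk$ with center $(x,y)$ and radius $s$ the parameter point $(x,y,s) \in \Re^3$. Each input disk $\disk_i \in \SetA$, with center $(a_i,b_i)$ and radius $r_i$, induces the region
\begin{equation*}
R_i = \Set{(x,y,s) \in \Re^3}{(a_i - x)^2 + (b_i - y)^2 \le (r_i + s)^2},
\end{equation*}
whose boundary $\Sigma_i$ is a quadric of constant description complexity. Two parameter points lying in the same face of the arrangement of $\Sigma_1, \ldots, \Sigma_n$ correspond to query disks intersecting exactly the same subset of $\SetA$, so each face $c$ has a canonical \emph{intersection list}, which is the list I want to attach to $c$.

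First I would construct the arrangement, together with its vertical decomposition, by the standard randomized incremental construction for $n$ constant-degree algebraic surfaces in $\Re^3$. This yields a structure of combinatorial complexity $O(n^3)$ in $O(n^{3+\delta})$ expected time. On top of it I would build a point location structure supporting $O(\log n)$ queries, via hierarchical cuttings (or a Dobkin--Kirkpatrick-style hierarchy adapted to three dimensions); this preprocessing also fits within $O(n^{3+\delta})$.

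To produce one list per cell, I would traverse the adjacency graph of the arrangement by \DFS. Two cells on opposite sides of the same facet are separated by a single surface $\Sigma_i$, so their intersection lists differ only by the presence or absence of the single disk $\disk_i$. Starting from a seed cell whose list I compute explicitly in $O(n)$ time, I maintain the current intersection list in a persistent balanced search tree; each of the $O(n^3)$ transitions performs a single $O(\log n)$-time persistent update, and the cell entered is tagged with a pointer to the resulting version. This phase costs $O(n^3 \log n)$ time and space. Answering a query disk then reduces to one point location in $\Re^3$ in $O(\log n)$ time, returning the stored pointer to the persistent list.

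The main obstacle is the machinery for arrangements, vertical decompositions, and point location of this particular family of quadrics in $\Re^3$: the ``$+\delta$'' slack is exactly what absorbs the overhead of these constant-degree algebraic primitives, following standard results on arrangements of low-degree surfaces. Granted those primitives, the total preprocessing time is $O(n^{3+\delta}) + O(n^3 \log n) = O(n^{3+\delta} \log n)$, with $O(\log n)$ query time, as claimed.
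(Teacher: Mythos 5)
Your proof takes essentially the same approach as the paper: lift each input disk to the cone $\Set{(x,y,s)}{(a_i-x)^2+(b_i-y)^2 = (r_i+s)^2}$ in $\Re^3$, observe that the cells of the arrangement of these cones are exactly the equivalence classes of query disks with a common intersection set, and answer queries by point location in that arrangement. You additionally spell out how to attach a canonical list to each cell via a \DFS over the cell-adjacency graph with a persistent balanced search tree -- a detail the paper leaves implicit, and one that nicely accounts for the extra $\log n$ factor in the stated preprocessing time.
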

\NotSoCGVer{%
\begin{proof}
    Map a disk centered at $(x,y)$ and radius $r$ in $\SetA$ to the
    $45^\circ$ cone in three dimensions with axis parallel to the
    $z$-axis and with an apex at $(x,y, -r)$. Observe that this cone
    intersects the $xy$-plane at a circle that forms the boundary of
    the original disk. A new disk centered at $(x',y')$ of radius $r'$
    intersects the original disk $\iff$ $(x',y',r')$ lies above this
    cone. Thus, every set in $\ranges$ is a face in the arrangement of
    $n$ cones induced by the disks of $\SetA$. This arrangement has
    $O(n^3)$ faces/vertices/edges.  The second result follows by
    preprocessing this arrangement to point location \cite{as-aa-00} -
    this takes $O(n^{3+\delta})$ time, and a point-location query
    takes $O( \log n)$ time, where $\delta \in (0,1)$ is an arbitrary
    fixed constant. 
\end{proof}%
}

\begin{lemma}
    \lemlab{small:matching}%
    Let $\DS$ be a set of $n$ disks in the plane.  Then, in
    $O(n \log n)$ time, one can decide if $\koptX{\DS} =O(n^{1/8})$,
    and if so compute and output this maximum matching.
\end{lemma}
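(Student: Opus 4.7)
The idea is that a small greedy matching exposes a tiny vertex cover of $\IGraphX{\DS}$, which we exploit to kernelize to a sublinear-size graph on which an exact matching algorithm runs in sublinear time.

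First, I would run the greedy algorithm of \lemref{greedy:g} in $O(n \log n)$ time to obtain a maximal matching $\CM_0$ of size $k_0$. Since maximal matchings are $(1/2)$-approximations, $k_0 \leq \kopt \leq 2 k_0$, so comparing $k_0$ to a fixed threshold $c n^{1/8}$ decides whether $\kopt = O(n^{1/8})$. If the threshold is exceeded, the algorithm reports ``not small''; otherwise set $V_M = \VX{\CM_0}$ and $V_F = \DS \setminus V_M$, so $|V_M| = O(n^{1/8})$ and, by maximality of $\CM_0$, the set $V_F$ is independent in $\IGraphX{\DS}$, making $V_M$ a vertex cover of $\IGraphX{\DS}$.

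Next, I would preprocess $V_M$ with \lemref{preprocess} for any fixed $\delta \in (0,1)$; this takes $O(|V_M|^{3+\delta}\log |V_M|) = o(n)$ time. For each $f \in V_F$ I would issue a query, in $O(\log n)$ time, to obtain a pointer uniquely identifying the set of disks in $V_M$ intersecting $f$, for a total of $O(n \log n)$. I would then group $V_F$ into buckets $B_1, \ldots, B_s$ by these pointers, so that all disks in $B_i$ share a common neighborhood $N_i \subseteq V_M$; since the data structure of \lemref{preprocess} is built on an arrangement with $O(|V_M|^3)$ cells, we have $s = O(n^{3/8})$. From each bucket I would keep an arbitrary subset of at most $|V_M|$ representatives, producing a set $R$ of size $|R| = O(s \cdot |V_M|) = O(n^{1/2})$.

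Finally, I would build the reduced graph $\GA$ on $V_M \cup R$ with the edges inside $V_M$ enumerated by brute force in $O(|V_M|^2) = O(n^{1/4})$ time and, for each representative $r \in R \cap B_i$, all edges to $N_i$. The key correctness observation is that $\koptX{\GA} = \koptX{\IGraphX{\DS}}$: any matching uses at most $|N_i| \leq |V_M|$ vertices of any $B_i$ (their only edges enter $V_M$), and because all disks in $B_i$ have identical neighborhoods, any matched non-representative in $B_i$ can be swapped with an unused representative without changing the matching size. Running an $O(m\sqrt{n})$-time general matching algorithm (such as that of Micali and Vazirani) on $\GA$, which has $n' = O(n^{1/2})$ vertices and $m' = O(s \cdot |V_M|^2) = O(n^{5/8})$ edges, yields a maximum matching in $O(m' \sqrt{n'}) = O(n^{7/8})$ time. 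The total running time is therefore dominated by the greedy step and the $n$ queries, giving $O(n \log n)$. The main subtlety lies in the bucket-kernelization correctness; once it is established, every remaining step reduces to a prior lemma.
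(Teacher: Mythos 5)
Your proposal follows the paper's proof essentially verbatim: compute a greedy matching to obtain a tiny vertex cover $V_M$, use the point-location structure of \lemref{preprocess} to classify the independent set $V_F$ by neighborhood in $V_M$, trim each class to $O(|V_M|)$ representatives, and run an exact $O(m\sqrt{n})$-time matching algorithm on the kernel (you use Micali--Vazirani, the paper uses Gabow--Tarjan, but that is immaterial). The bounds you derive for the kernel size and running time match the paper's, and the swap argument justifying the trimming is the same one the paper uses.
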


\NotSoCGVer{%
\begin{proof}
    Let $N = cn^{1/8}$, where $c$ is some sufficiently large constant.
    Compute a greedy matching $\CM$ in $\IGraphX{\DS}$ using the
    algorithm of \lemref{greedy:g}. If $\cardin{\CM } > N$, then the
    maximum matching is larger than desired, and the algorithm is
    done.

    Otherwise, let $\VFree = \DS \setminus \VCM$, where $\VCM$ is the
    set of vertices of $\CM$.  The set $\VFree$ is independent, as
    $\CM$ is a greedy matching. Preprocessing the elements of $\VCM$
    using the data-structure of \lemref{preprocess}, we can partition
    the disks of $\VFree$ into $O(N^3)$ classes, where all the disks
    in the same class intersects exactly the same subset of disks of
    $\CM$.  Furthermore, doing point location query for the lifted
    point, corresponding to each disk of $\VFree$, one can decide
    identify its class in $O( \log n)$ time. Observe that all the
    vertices in the same class have exactly the same neighbors in
    $\IGraphX{\DS}$. This takes
    $O(N^{3+\delta} + n \log n) = O(n \log n)$ time.

    Observe that the maximum matching can use at most $2N$ vertices
    that belongs to the same class. Thus, classes that exceed this
    size can be trimmed to this size. Let $\DS'$ be the resulting set
    of disks, and observe that
    $\cardin{\DS'} = O(N^3 \cdot N) = O(N^4)$. Observe that $\DS'$ has
    the same cardinality maximum matching as $\DS$. Furthermore, the
    graph $\G'= \IGraphX{\DS'}$ has at most $m =O(N^5)$ edges.

    Using the maximum cardinality matching algorithm of Gabow and
    Tarjan on $\G'$, takes
    $O(\sqrt{m} | \DS'|) = O(N^{5/2 + 4} ) = o(n)$. We return this as
    the desired maximum matching.
\end{proof}
}

\subsection{Estimation of matching size %
   using separators}
\seclab{recursiveSeparatorApproach}

The input is a set $\DS$ of $n$ disks in the plane with density
$\cDensity$ (if the value of $\cDensity$ is not given, it can be
approximated in near linear time \cite{ah-adrp-08}). Our purpose here
is to $(1-\eps)$-estimate the size of the maximum matching in $\DS$ in
near linear time. Since we can check (and compute it) if the maximum
matching is smaller than $n^{1/8}$ by \lemref{small:matching}, in
$O(n \log n)$ time, assume that the matching is bigger than that.

\subsubsection{Preliminaries}

\begin{lemma}[\cite{h-speps-13}]
    \lemlab{sep:innocent}%
    Let $\p$ be a point in the plane, and let $r$ be a random number
    picked uniformly in an interval $[\alpha, 2\alpha]$. Let $\DSA$ be
    a set of interior disjoint disks in the plane. Then, the expected
    number of disks of $\DSA$ that intersects the circle
    $\Circle = \Circle( \p, r)$, that is centered at $p$ and has
    radius $r$, is $O( \sqrt{\cardin{\DSA}})$.    
\end{lemma}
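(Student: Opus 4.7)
The plan is to bound, for each disk $\disk \in \DSA$, the probability that $\Circle(\p, r)$ intersects $\disk$, and then sum via linearity of expectation.  First I would observe that for a disk $\disk$ with center $c$ and radius $\rho$, we have $\Circle(\p, r) \cap \disk \neq \emptyset$ iff $r$ lies in the interval $[\|\p-c\|-\rho,\ \|\p-c\|+\rho]$ of length $2\rho$.  Intersecting with $[\alpha, 2\alpha]$ gives
\begin{equation*}
    \Pr\!\left[\Circle(\p, r) \cap \disk \neq \emptyset\right]
    \leq
    \min(1,\ 2\rho/\alpha),
\end{equation*}
and this probability is zero unless $\|\p-c\| \leq 2\alpha + \rho$; call such $\disk$ \emph{relevant}.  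By linearity of expectation, the expected intersection count is at most $\sum_{\disk \text{ relevant}} \min(1,\ 2\rho/\alpha)$.

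Next I would split the relevant disks into \emph{small} ($\rho \leq \alpha$) and \emph{big} ($\rho > \alpha$), and handle each class separately.  Every small relevant disk is contained in the disk $B_4$ of radius $4\alpha$ centered at $\p$, since its center lies within $2\alpha + \rho \leq 3\alpha$ of $\p$ and its radius is at most $\alpha$.  Interior-disjointness then gives $\sum \pi \rho^2 \leq 16\pi\alpha^2$, so $\sum \rho^2 \leq 16\alpha^2$.  If $k$ denotes the number of small relevant disks, Cauchy--Schwarz yields $\sum \rho \leq 4\alpha\sqrt{k}$, and hence the small-disk contribution to the expected count is at most $(2/\alpha)\sum\rho \leq 8\sqrt{k} \leq 8\sqrt{\cardin{\DSA}}$.

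For the big disks, I claim only $O(1)$ of them can be relevant, so their total contribution to the expectation is $O(1)$ (each contributing at most $1$).  The key geometric claim is that every big relevant disk $\disk$ intersects the disk $B_3$ of radius $3\alpha$ centered at $\p$ in area at least $c\alpha^2$ for some absolute constant $c > 0$.  Indeed, $\rho \geq \alpha$ together with $\|\p-c\| \leq 2\alpha + \rho$ ensures that the nearest point of $\disk$ to $\p$ is within distance $2\alpha$, so $\disk$ genuinely cuts into $B_3$; a direct lens-area computation over the admissible pairs $(\rho, \|\p-c\|)$ confirms the uniform lower bound, with the extremal cases being $\rho = \alpha$ with $\disk$ at distance $3\alpha$ (lens area $\approx 1.46\alpha^2$) and $\rho \to \infty$ with $\disk$ externally tangent to the disk of radius $2\alpha$ around $\p$ (circular segment area $\approx 3.1\alpha^2$).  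Since the big relevant disks are interior-disjoint, $\sum |\disk \cap B_3| \leq |B_3| = 9\pi\alpha^2$, bounding the number of big relevant disks by $O(1)$.

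The main obstacle is establishing the uniform lower bound $|\disk \cap B_3| = \Omega(\alpha^2)$ via the lens-area formula, which requires verifying that the infimum over admissible $(\rho, \|\p-c\|)$ in the big-disk regime is a positive constant times $\alpha^2$; once this is pinned down, the rest of the argument is routine.  Combining the two contributions yields $\mathbb{E} \leq 8\sqrt{\cardin{\DSA}} + O(1) = O(\sqrt{\cardin{\DSA}})$, as claimed.
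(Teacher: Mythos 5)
Your proof is correct and reaches the same bound, but it handles the hard case differently from the paper. Both you and the paper use the same core engine for the many small disks: pack them into a ball of radius $O(\alpha)$ around $\p$, get $\sum\rho_i^2 = O(\alpha^2)$ from interior-disjointness, and then apply Cauchy--Schwarz to the per-disk hitting probabilities $O(\rho_i/\alpha)$, yielding $O(\sqrt{\cardin{\DSA}})$. Where you diverge is in bounding the number of ``large'' disks that could each be hit with probability $\Theta(1)$. The paper places a fine $O(1)$-point grid $\PS$ over $[-3\alpha,3\alpha]^2$ and observes that interior-disjoint disks can be in bijection with at most $|\PS|$ grid points, so any disk large enough to be problematic (radius bigger than a grid cell) contains a grid point and there are only $O(1)$ of them; the remaining disks are automatically small and fall into the area argument. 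You instead argue directly that any ``big'' relevant disk ($\rho > \alpha$, nearest point within $2\alpha$ of $\p$) carves area $\Omega(\alpha^2)$ out of $B(\p,3\alpha)$, so interior-disjointness caps their number at $O(1)$. Both arguments are valid; the grid version is combinatorial and avoids the lens-area calculus, at the cost of a slightly fussy claim that non-grid-hitting disks must be small, while your lens-area version is geometrically cleaner and more self-contained but, as you note, hinges on verifying the uniform $\Omega(\alpha^2)$ lower bound over the admissible $(\rho, \|\p-c\|)$ region --- which does hold, since for fixed $\rho$ the overlap area is decreasing in $\|\p-c\|$ and the worst case $\|\p-c\| = 2\alpha+\rho$ gives an area increasing in $\rho$ from $\approx 1.45\alpha^2$ at $\rho=\alpha$ up to $\approx 3.1\alpha^2$ as $\rho\to\infty$. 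One minor omission worth fixing: your ``relevant'' condition $\|\p-c\|\le 2\alpha+\rho$ is only one of the two inequalities needed for the hitting interval $[\|\p-c\|-\rho,\ \|\p-c\|+\rho]$ to meet $[\alpha,2\alpha]$; you also need $\|\p-c\|+\rho\geq\alpha$, though this is vacuous for big disks and harmless for small ones (it only removes disks, so your upper bound still holds).
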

\NotSoCGVer{%
\begin{proof}%
    We include the proof for the sake of completeness.  For simplicity
    of exposition, translate and scale the plane so that $\p$ is in
    the origin, and $\alpha =1$.  Next, cover the square
    $S = [-3,3]^2$ by a grid of sidelength $\ell = 1 /10$, and let
    $\PS$ be the set of points formed by the vertices of this grid. We
    have $|\PS| = O(1/\ell^2) = O( 1)$. The number of disks of $\DSA$
    that contains points of $\PS$ is bounded by $|\PS|$. Any other
    disk of $\DSA$ that intersects $\Circle$, must be of radius
    $\leq 1/2$, and is fully contained inside $S$. In particular, let
    $D_1, \ldots, D_m$ be these disks, with $r_1, \ldots r_m$ being
    their radii, respectively.  Observe that
    \begin{math}
        \sum_i \pi r_i^2 \leq \mathrm{area}(S) = 36.        
    \end{math}
    and the probability of the $i$\th disk to intersect $\Circle$ is
    at most $2r_i / \alpha = 2r_i$. As such, the expected number of
    disks of $\DSA$ intersecting $\Circle$ is bounded by
    \begin{math}
        |\PS| + \sum_{i=1}^m 2r_i.        
    \end{math}
    However, by the Cauchy-Schwarz inequality, we have
    \begin{math}
        \sum_{i=1}^m r_i%
        \leq%
        \sqrt{\sum_{i=1}^m 1^2} \sqrt{\sum_{i=1}^m r_i^2}%
        \leq%
        \sqrt{m} \sqrt{36 /\pi}%
        =%
        O( \sqrt{\cardin{\DSA}}),
    \end{math}
    as $m \leq \cardin{\DSA}$.
\end{proof}%
}

\subsubsection{Algorithm idea and divisions}

A natural approach to our problem is to break the input set of disks
into small sets, and then estimate the maximum matching size in each
one of them. The problem is that for this to work, we need to
partition the disks participating in the optimal matching, as this
matching can be significantly smaller than the number of input
disks. Since we do not have the optimal matchings, we would use a
proxy to this end -- the greedy matching. The algorithm recursively
partitions it using a random cycle separator provided by
\lemref{w:s:low:density}. We then partition the disks into three sets
-- inside the cycle, intersecting the cycle (i.e., the separator), and
outside the cycle. The algorithm continues this partition recursively
on the in/out sets, forming a partition hierarchy.

\begin{remark}
    \remlab{boundary}%
    For a set generated by this partition, its \emphi{boundary} is the
    set of all disks that intersect it and are not in the set. The
    algorithm maintains the property that for such a set with $t$
    disks, the number of its boundary vertices is bounded by
    $O( \cDensity + \sqrt{\cDensity t})$. This can be ensured by
    alternately separating for cardinality of the set, and for the
    cardinality of the boundary vertices, see \cite{hq-aapel-17} and
    references therein for details. For simplicity of exposition we
    assume this property holds, without going into the low level
    details required to ensure this.
\end{remark}

\subsubsection{The algorithm}

The input is a set $\DS$ of $n$ disks in the plane with density
$\cDensity$, and parameters $\eps \in (0,1)$.  The algorithm computes
the greedy matching, denoted by $\CM$, using \lemref{greedy:g}. If
this matching is smaller than $O(n^{1/8})$, then the algorithm
computes the maximum matching using \lemref{small:matching}, and
returns it.

Otherwise, the algorithm partitions the disks of $\DSA = \VX{\CM}$
recursively using separators, creating a separator hierarchy as
described above.  Conceptually, a subproblem here is a region $R$ in
the plane formed by the union of some faces in an arrangement of
circles (i.e., the separators used in higher level of the
recursion). Assume the algorithm has the sets of disks
$\DSA_{\subseteq R} = \Set{ \disk \in \DSA}{\disk \subseteq R}$ and
$\DS_{\subseteq R} = \Set{ \disk \in \DS}{\disk \subseteq R}$ at
hand. The algorithm computes a separator of $\DSA_{\subseteq R}$,
computes the relevant sets for the children, and continues recursively
on the children.  Thus, for a node $u$ in this recursion tree, there
is a corresponding region $R(u)$, a set of active disks
$\DSA_u = \DSA_{\subseteq R(u)}$, and $\DS_u = \DS_{\subseteq R(u)}$.

The recursion stops the construction in node $u$ if
$|\DSA_{u}| \leq b$, where
\begin{equation*}
    b = \cC \cDensity/\eps^2,
\end{equation*}
and $\cC$ is some sufficiently large constant.  This implies that this
recursion tree has $U = O(\kopt/ b)$ leafs.

If a disk of $\DS$ intersects some separator cycles then it is added
to the set of ``lost'' disks $\LS$.  The hierarchy maps every disk of
$\DS \setminus \LS$ to a leaf. As such, for every leaf $u$ of the
separator tree, there is an associated set $\DS_u$ of disks stored
there. All these leaf sets, together with $\LS$, form a disjoint
partition of $\DS$.

The algorithm now computes for every leaf set a greedy matching, using
\lemref{greedy:g}. Let $e_v$ be the size of this matching. Let $\Xi$
be the set of all leaf nodes. The algorithm next
$(1\pm \eps/4)$-estimates $\sum_{ v \in \Xi} \koptX{\DS_v}$, using
importance sampling, with the estimates
$e_v \leq \koptX{\DS_v} \leq 2e_v$, for all $v$.  Using,
\lemref{importance:alg}, this requires computing
$(1-\eps/8)$-approximate maximum matching for
\begin{equation*}
    t = O\pth{ 2^4 \eps^{-2} (\log \log n + \log n)
           \log n } = O( \eps^{-2} \log^2 n)
\end{equation*}
leafs, this is done using the algorithm of \lemref{small:matching} if
the maximum matching is small compared to the number of disks in this
subproblem, and the algorithm of \lemref{low:density} otherwise. The
algorithm now returns the estimate returned by the algorithm of
\lemref{importance:alg}.

\subsubsection{Analysis}
    
\begin{lemma}
    \lemlab{expect:est}%
    We have
    $\Ex{\sum_{v \in L} \koptX{\DS_v}} \geq (1-\eps/4)\koptX{\DS}$.
\end{lemma}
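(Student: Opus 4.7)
The plan is to deterministically lower bound $\sum_{v \in L} \koptX{\DS_v}$ by $\kopt - |\LS|$, and then separately bound $\Ex{|\LS|} \leq (\eps/4)\kopt$ using the properties of the separator hierarchy. The statement then follows by taking expectations and adding.

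First I would prove the deterministic bound. For any two distinct leaves $u, v \in L$, the associated regions $R(u)$ and $R(v)$ are disjoint, since some ancestral separator cycle in the hierarchy strictly separates them. By construction every disk in $\DS_u$ is fully contained in $R(u)$, and similarly for $v$, so no disk of $\DS_u$ intersects any disk of $\DS_v$. Combined with the fact that $\DS \setminus \LS = \bigsqcup_{v \in L} \DS_v$, this means the intersection graph $\IGraphX{\DS \setminus \LS}$ is the disjoint union of the graphs $\IGraphX{\DS_v}$, so $\koptX{\DS \setminus \LS} = \sum_{v \in L} \koptX{\DS_v}$ exactly. Since deleting a vertex from a graph decreases the maximum matching size by at most one, $\koptX{\DS \setminus \LS} \geq \kopt - |\LS|$, and taking expectations reduces the claim to showing $\Ex{|\LS|} \leq (\eps/4)\kopt$.

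The main obstacle is the bound on $\Ex{|\LS|}$. Decompose $\LS$ as a union over the internal nodes of the recursion: each node $u$ contributes the disks of $\DS_u$ that cross its separator circle. That circle is produced by \lemref{w:s:low:density} applied to $\DSA_u$, with radius chosen uniformly from an interval $[\alpha, 2\alpha]$; applying the same lemma to $\DS_u$ (which inherits density $\cDensity$ from $\DS$), the expected number of disks of $\DS_u$ crossed by the circle is $O\bigl(\cDensity + \sqrt{\cDensity\, |\DS_u|}\bigr)$. Using the invariant of \remref{boundary}, which the algorithm maintains by alternating cuts so that the boundary size of each subproblem is always controlled by a function of $|\DSA_u|$, the standard recursive analysis of separator hierarchies on low-density sets yields
\begin{equation*}
    \Ex{|\LS|}
    \;=\;
    O\bigl( \cDensity \cdot U \;+\; \kopt \sqrt{\cDensity/b}\,\bigr),
\end{equation*}
where $U = O(\kopt/b)$ is the number of leaves, and recursion stops whenever $|\DSA_u| \leq b$.

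Finally, plugging in the chosen value $b = \cC \cDensity / \eps^2$, the first term becomes $O(\eps^2 \kopt)$ and the second becomes $O(\kopt\, \eps / \sqrt{\cC})$; each is at most $(\eps/8)\kopt$ once the absolute constant $\cC$ is chosen sufficiently large. Hence $\Ex{|\LS|} \leq (\eps/4)\kopt$, and combined with the deterministic bound from the first step this gives the claimed inequality $\Ex{\sum_{v \in L} \koptX{\DS_v}} \geq (1-\eps/4)\koptX{\DS}$.
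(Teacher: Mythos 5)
Your deterministic first step is sound but too lossy: $\koptX{\DS \setminus \LS} \geq \kopt - |\LS|$ is true, yet the reduction to ``$\Ex{|\LS|} \leq (\eps/4)\kopt$'' does not close the argument because that inequality is false in general. The separator hierarchy is built on $\DSA = \VX{\CM}$ (the greedy-matching vertices, of which there are at most $2\kopt$), and \lemref{w:s:low:density} guarantees only that the circle at node $u$ cuts few disks of $\DSA_u$. The larger set $\DS_u$ also contains all the free disks $\VFree = \DS \setminus \DSA$ lying in $R(u)$, and $|\VFree|$ can be $\Theta(n)$ with no relation to $\kopt$. Even granting your per-node estimate $O(\cDensity + \sqrt{\cDensity\,|\DS_u|})$ (which itself is not immediate, since the circle's center and radius range $[\alpha,2\alpha]$ were chosen for $\DSA_u$, not $\DS_u$), the sum of $\sqrt{|\DS_u|}$ over the parents of leaves is controlled by $\sqrt{Un}$, not by $\kopt/\sqrt{b}$, so your claimed $O(\cDensity U + \kopt\sqrt{\cDensity/b})$ simply does not follow; the $\kopt$ cannot appear there. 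The paper flags this exact obstruction: ``The number of vertices of $\VFree$ that intersects the separating circles of the separation hierarchy can be arbitrarily larger than $\kopt$.''

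The repair requires two additional ideas that are absent from your proposal. First, strengthen the deterministic step to $\koptX{\DS\setminus\LS} \geq \kopt - |\LS \cap \VOpt|$ (a deleted vertex outside $\VOpt$ costs nothing; each deleted vertex of $\VOpt$ kills at most one edge of $\MOpt$). Second, bound $\Ex{|\LS\cap\VOpt|}$ by splitting into $\LS\cap\VOpt\cap\DSA$ and $\LS\cap\VOpt\cap\VFree$. The $\DSA$ part is handled by the geometric sum you wrote, using \lemref{w:s:low:density} on the $\DSA_u$'s. The $\VFree$ part is where the work is: since $\CM$ is maximal, $\VFree$ is an independent (interior-disjoint) family, and $\VOpt\cap\VFree$ has at most $2\kopt$ members, so \lemref{sep:innocent} gives a $O(\sqrt{\cdot})$ bound on how many a random circle cuts --- with no $\cDensity$ factor and with the argument of the square root bounded in terms of the matching. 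Propagating this down the hierarchy uses the boundary invariant of \remref{boundary} to argue that the portion of $\VOpt$ inside a subproblem $u$ is $O(|\DSA_u|)$, so that the same geometric-sum accounting applies. Without restricting to $\VOpt$ and without \lemref{sep:innocent}, the bound does not go through.
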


\NotSoCGVer{%
\begin{proof}
    Let $\cSep \in (0,1)$ be the constant (that does not depend on the
    density) from \lemref{w:s:low:density} -- in two dimensions
    $\cSep \approx 9/10$ \cite{h-speps-13}. The depth of the recursion
    of the above algorithm is
    \begin{math}
        h%
        =%
        O\bigl( 1+ \log_{1/\cSep} ({2\cardin{\CM}}/{b}) \bigr)%
        =%
        O( \log \tfrac{\kopt}{\eps^2} ).
    \end{math}

    The expected
    number of disks of $\DSA$ (and thus edges of $\CM$) destroyed by
    this process is    
    \begin{equation*}
        \sum_{i=0}^{h} 2^i 
        O\pth{\cDensity + \cDensity^{1/2} (2\kopt \cSep^i )^{1/2} }.
    \end{equation*}
    This is a geometric summation dominated by the last term -- that
    is, the total loss is bounded by the loss in the parents of the
    leafs of the recursion multiplied by (say) eight.  Let
    $N = O(\kopt / b)$ be the number of leafs in the recursion
    tree. The loss in the leafs is bounded by
    \begin{equation*}
        \lambda%
        =
        N \cdot 
        O\pth{\cDensity + \sqrt{\cDensity b } }
        =%
        O\pth{ \frac{2|\DSA|\sqrt{\cDensity b  }}{ b} }
        =%
        O\pth{ \frac{\kopt \sqrt{\cDensity}}{ \sqrt{ \cC \cDensity/\eps^2}}}
        =%
        O\pth{ \frac{ \eps }{ \sqrt{ \cC}} \kopt }
        <
        \frac{\eps}{64} \kopt,       
    \end{equation*}
    by making $\cC$ sufficiently large. Specifically, let $\DSA'$ be
    the disks of $\DSA$ that do not intersect any of the circles in
    the separator hierarchy -- we have that the expected loss is
    $\ell = \Ex{\cardin{\DSA \setminus \DSA'}} \leq 8 \lambda \leq
    \eps \kopt / 8$.

    Let $\VOpt = \VX{\MOpt}$, and let $\VFree = \DS \setminus
    \DSA$. As the disks in $\DSA$ are the vertices of a maximum
    matching, it follows that $\VFree$ is an independent set. The
    number of vertices of $\VFree$ that intersects the separating
    circles of the separation hierarchy can be arbitrarily larger than
    $\kopt$. But fortunately, for our purposes, we care only about
    bounding the number of disks of $\VOpt$ intersecting these
    cycles. To this end, observe that the expected loss in
    $\VOpt \cap \DSA$ is bounded by $\ell$. As such, we remain with
    the task of bounding the loss in $\VOpt \cap \VFree$.  So consider
    the separator in the top of the hierarchy.  By
    \lemref{sep:innocent}, the expected number of vertices of
    $\VOpt \cap \VFree$ that intersect it, is bounded by
    $O\bigl( \sqrt{\cardin{\VOpt \cap \VFree}} \bigr) = O( \sqrt{
       \kopt})$. However, this quantity is smaller than the number of
    disks of $\DSA$ being cut by this separator. Repeating
    (essentially) the same calculations as above, we get that the
    total expected number of disks of $\VOpt \cap \VFree$ cut by the
    cycles of the separator hierarchy is also bounded by $\ell$.

    The above claim requires some care -- a subproblem at a node $u$
    has the set of vertices from the greedy matching $k_u =
    |\DSA_u|$. Importantly, the boundary set for $u$ has size
    $t = O( \sqrt{k_u})$, see \remref{boundary}. As such, a maximal
    matching in this subproblem (even if we include all the disks
    intersecting the boundary) is of size at most $k_u + t$. Indeed,
    every one of the boundary disks of the greedy matching might now
    be free to be engaged to a different disk in this subproblem (we
    use here the property that all the disks not in the greedy
    matching are independent). The maximum matching can be at most
    twice the size of the greedy matching, which imply that the
    maximum matching size for $\DS_u$ is bounded by (say)
    $2(k_u + t) \leq 3k_u$, which implies that the above bounding
    argument indeed works.

    Combining the two expected bounds on the size of
    $\LS \cap \VOpt \cap \VFree$ and $\LS \cap \VOpt \cap \DSA$
    implies that in expectation, summed over all the leafs, the
    maximum matching, has at least $\kopt - 2 \ell$ edges, which in
    expectation is $\geq (1-\eps/4)\kopt$. This implies that the
    ``surviving'' maximum matching in the leafs is a good
    approximation to the maximum matching.
\end{proof}
}

\begin{lemma}
    The running time of the above algorithm is
    $O(n \log n + \cDensity^{9}\eps^{-19} \log^2 n)$.
\end{lemma}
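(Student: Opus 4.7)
The plan is to split the running time into the preprocessing cost, which yields the $O(n \log n)$ term, and the importance-sampling evaluation cost, which yields the $O(\cDensity^9 \eps^{-19} \log^2 n)$ term.

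\emph{Preprocessing cost.} The initial greedy matching on $\DS$ via \lemref{greedy:g} takes $O(n \log n)$, and the early-exit branch through \lemref{small:matching} likewise runs in $O(n \log n)$. Constructing the separator hierarchy on $\DSA$ uses \lemref{w:s:low:density} once per recursive call (each taking expected time linear in its subproblem size); since the subproblem size shrinks by the constant factor $\cSep$ per level and the depth is $O(\log n)$, the total work is $O(n \log n)$. Assigning each disk of $\DS \setminus \LS$ to its leaf is done by tracking its containment through the arrangement of separator cycles at $O(\log n)$ per disk, contributing another $O(n \log n)$. Finally, computing a greedy matching per leaf via \lemref{greedy:g} takes $\sum_v O(|\DS_v| \log |\DS_v|) = O(n \log n)$, since the $\DS_v$'s partition $\DS \setminus \LS$.

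\emph{Sampling phase.} By \lemref{importance:alg}, the importance sampling selects $t = O(\eps^{-2} \log^2 n)$ leaves, and for each such $v$ we must compute a $(1 - \eps/8)$-approximate maximum matching of $\IGraphX{\DS_v}$. From the reasoning underlying \lemref{expect:est}, $\koptX{\DS_v} \leq 3b$ with $b = \cC \cDensity / \eps^2$, so the target matching has size $O(\cDensity/\eps^2)$, and the subroutines of \lemref{small:matching} and \lemref{low:density} are the tools of choice.

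\emph{The main obstacle} is that $|\DS_v|$ is not bounded in terms of $b$: the free disks in a leaf form an independent set whose size is not controlled by the density, so calling the matching subroutine directly on $\DS_v$ may cost $\Omega(|\DS_v|)$ per sampled leaf, which cannot be absorbed into the claimed bound. I overcome this by first trimming $\DS_v$ to a surrogate $\DS_v'$ of size $O(b^4)$ with $\koptX{\DS_v'} = \koptX{\DS_v}$, following the construction in the proof of \lemref{small:matching}: preprocess $\DSA_v$ (of size $O(b)$) via \lemref{preprocess} in $O(b^{3+\delta} \log n)$ time, classify the free disks of $\DS_v$ by the subset of $\DSA_v$ they intersect into $O(b^3)$ equivalence classes, and retain at most $O(b)$ representatives per class, which is sufficient because no maximum matching can use more. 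The classification of free disks across all leaves is performed during the leaf-assignment pass at $O(\log n)$ per disk, so it is absorbed into the $O(n \log n)$ preprocessing term. The remaining per-sample cost is then polynomial in $b = O(\cDensity/\eps^2)$ and $1/\eps$, independent of $|\DS_v|$; summing over the $t$ samples and balancing the exponents yields the claimed $O(\cDensity^9 \eps^{-19} \log^2 n)$ bound.
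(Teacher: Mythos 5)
Your proof is correct, and it proves the stated bound (in fact a somewhat stronger one), but it takes a genuinely different route from the paper's at the one place where a non-trivial idea is needed.

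The paper does \emph{not} trim the sampled leaves. Instead, for each sampled leaf $v$ with $n_v = |\DS_v|$ disks it splits into two cases: if $n_v \geq b^8$ then $\koptX{\DS_v} \leq b \leq n_v^{1/8}$, so \lemref{small:matching} applies directly and the (exact) matching is found in $O(n_v \log n_v)$ time, which is absorbed into the $O(n\log n)$ term after summing over the $t$ sampled leaves (since $\sum_v n_v \leq n$); if $n_v < b^8$ it runs \lemref{low:density}, paying $O(n_v \log n_v + n_v\cDensity/\eps) = O(n_v\log n_v + b^8\cDensity/\eps) = O(n_v\log n_v + \cDensity^9/\eps^{17})$, and the second term over $t = O(\eps^{-2}\log^2 n)$ samples yields the $O(\cDensity^9\eps^{-19}\log^2 n)$ bound. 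Your route instead reuses the equivalence-class trimming from the \emph{proof} of \lemref{small:matching} (rather than invoking its statement) to shrink each sampled leaf to a surrogate of size $O(b^4)$ with the same maximum matching, and then applies \lemref{low:density} uniformly. Both arguments rely on the same key facts — $\koptX{\DS_v} = O(b)$, and free disks in a leaf form an independent set — but your version avoids the size case split at the cost of an extra classification pass, and actually gives the sharper per-sample cost $O(b^5/\eps) = O(\cDensity^5/\eps^9)$, i.e., a total of $O(n\log n + \cDensity^5\eps^{-11}\log^2 n)$, which of course implies the stated $O(n\log n + \cDensity^9\eps^{-19}\log^2 n)$.

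Two small imprecisions worth flagging, neither fatal. First, your ``main obstacle'' framing is slightly off: the $\Omega(|\DS_v|)$ part of the per-leaf cost \emph{is} absorbable, since the $\DS_v$ are disjoint; what cannot be absorbed is the $\Theta(|\DS_v|\cDensity/\eps)$ term of \lemref{low:density}, which is exactly what the paper sidesteps by switching to \lemref{small:matching} when $|\DS_v|$ is large, and what you sidestep by trimming. Second, the statement that the classification of free disks is ``performed during the leaf-assignment pass'' cannot be literally right, because building the point-location structure of \lemref{preprocess} on $\DSA_v$ must precede classifying $\DS_v$, and you only want to pay that $O(b^{3+\delta}\log b)$ cost for the $t$ sampled leaves (otherwise it would be $O((n/b)\cdot b^{3+\delta}\log b)$, which is too large). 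The correct ordering is: compute all $e_v$, run the importance sampler to pick the $t$ leaves, and only then preprocess and classify those leaves; with that ordering your cost accounting goes through and is dominated by the stated bound.
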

\NotSoCGVer{%
\begin{proof}
    The separator hierarchy takes $O(n \log n)$ time to build. So,
    consider the $t =O( \eps^{-2} \log^2 n )$ subproblems. If the
    $i$\th subproblem has $n_i$ disks, and $n_i \geq b^8$, then
    computing the maximum matching for this subproblem takes
    $O( n_i \log n_i)$ time. Otherwise, it takes
    \begin{math}
        O( n_i \log n_i + n_i \cDensity/\eps)%
        =%
        O( n_i \log n_i + b^8 \cDensity /\eps)%
        =%
        O(n_i \log n_i +\cDensity^9/\eps^{17}).
    \end{math}
    Summing over all $t$ subproblems, this takes
    $O( n \log n + t\cDensity^9/\eps^{17} ) = 
    O(n \log n + \cDensity^{9}\eps^{-19} \log^2 n)$.
\end{proof}%
}

\begin{theorem}
    \thmlab{g:d::estimte}%
    Given a set $\DS$ of $n$ disks in the plane with density
    $\cDensity$, and a parameter $\eps \in (0,1)$, one can compute in
    $O(n \log n + \cDensity^{9}\eps^{-19} \log^2 n)$ time, a number
    $Z$, such that
    \begin{math}
        (1-\eps)\kopt \leq \Ex{Z}
    \end{math}
    and
    \begin{math}
        \Prob{ Z > (1+\eps) \kopt} < 1/n^{O(1)},
    \end{math}
    where $\kopt = \koptX{\DS}$ is the size of the maximum matching in
    $\IGraphX{\DS}$.
\end{theorem}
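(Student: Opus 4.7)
The plan is to assemble the theorem as a direct corollary of the algorithm already described together with \lemref{expect:est} and the running time lemma, adding only the high-probability upper bound on $Z$ which has not yet been explicitly established. First, I would dispose of the easy case: if $\koptX{\DS} = O(n^{1/8})$, then \lemref{small:matching} computes the matching exactly in $O(n \log n)$ time, and we set $Z$ to its size. Otherwise we run the full separator-hierarchy plus importance-sampling algorithm on $\DS$.

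For the lower bound on $\Ex{Z}$, I would apply \lemref{expect:est}, which already shows $\Ex{\sum_{v \in \Xi} \koptX{\DS_v}} \geq (1-\eps/4)\kopt$ where the expectation is over the random separator radii. The per-leaf matching sizes are then either computed exactly or $(1-\eps/8)$-approximated, and \lemref{importance:alg} returns a $(1\pm \eps/8)$-factor estimate $Z$ of the resulting weighted sum with probability $\geq 1 - 1/n^{O(1)}$. Composing the three multiplicative factors, one gets $\Ex{Z} \geq (1-\eps/8)^2(1-\eps/4)\kopt \geq (1-\eps)\kopt$ for $\eps$ sufficiently small (which can be assumed since $\eps < 1$ and the constants absorb the cross terms).

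For the upper bound, the key deterministic observation is that the leaf disk sets $\DS_v$ are pairwise disjoint subsets of $\DS$, so the union of any matchings drawn from them is a valid matching in $\IGraphX{\DS}$. Consequently $\sum_{v \in \Xi} \koptX{\DS_v} \leq \kopt$ unconditionally, independent of the random choices. The computed per-leaf values are at most $\koptX{\DS_v}$, so the true value of the sum fed to importance sampling is $\leq \kopt$. Then $\Prob{Z > (1+\eps/8)\cdot \kopt} \leq \Prob{Z > (1+\eps/8)\cdot \sum_v \koptX{\DS_v}} \leq 1/n^{O(1)}$ by \lemref{importance:alg}, which gives $\Prob{Z > (1+\eps)\kopt} \leq 1/n^{O(1)}$ with room to spare.

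The running time bound $O(n \log n + \cDensity^{9}\eps^{-19}\log^2 n)$ is quoted verbatim from the preceding lemma, which already accounts for building the separator hierarchy, computing greedy matchings at every leaf, and evaluating $t = O(\eps^{-2} \log^2 n)$ leaf subproblems via \lemref{small:matching} or \lemref{low:density}. The main subtlety — and the only place requiring care — is checking that the importance-sampling parameters ($b=2$, $M = n$, $\BadProb = 1/n^{O(1)}$, $\epsA = \eps/8$) are consistent with the hypothesis $e_v \leq \koptX{\DS_v} \leq 2 e_v$ established by the greedy estimates, and verifying that the deterministic bound $\sum_v \koptX{\DS_v} \leq \kopt$ is what allows the one-sided upper tail of $Z$ to be controlled even though the lower bound is only in expectation.
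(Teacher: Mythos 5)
Your proposal is correct and follows essentially the same route as the paper, whose own proof of \thmref{g:d::estimte} is a one-line pointer to \lemref{expect:est}, the running-time lemma, and the guarantee of \lemref{importance:alg}. You usefully make explicit the key observation the paper leaves implicit — that the leaf sets $\DS_v$ are pairwise disjoint, so $\sum_{v}\koptX{\DS_v}\leq\kopt$ holds deterministically, which is exactly why the upper tail of $Z$ can be controlled with high probability even though the lower bound is only in expectation.
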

\NotSoCGVer{%
\begin{proof}
    The result follows from the above, adding in the guaranties
    provided by the importance sampling.
\end{proof}%
}

\BibTexMode{%
   \SoCGVer{%
      \bibliographystyle{plainurl}%
   }%
   \NotSoCGVer{%
      \bibliographystyle{alpha}%
   }%
   \bibliography{matchings}
}%
\BibLatexMode{\printbibliography}

\NotSoCGVer{%
\appendix

\section{Chernoff inequality}

The following is a standard version of Chernoff inequality.
\begin{theorem}
    \thmlab{Chernoff:0:1}%
    Let $X_1, \ldots, X_n$ be $n$ independent \emph{coin flips}, such
    that $\Prob{X_i =0} =\Prob{X_i=1} = \frac{1}{2}$, for $i=1,\ldots,
    n$. Let $Y = \sum_{i=1}^n X_i$. Then, for any $\Delta >0$, we have
    \begin{math}
        \Prob{Y \leq {n}/{2} - \Delta } \leq \exp \pth{-2\Delta^2/n}.
    \end{math}
\end{theorem}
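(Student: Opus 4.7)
The plan is the classical Chernoff exponential-moment argument, specialized to symmetric $\pm 1/2$ centered variables (this is essentially Hoeffding's inequality for fair coins).

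First I would center the variables: set $Z_i = X_i - 1/2$, so that $Z_i \in \{-1/2, +1/2\}$ with mean zero and $Y - n/2 = \sum_{i=1}^n Z_i$. The target event rewrites as
\begin{equation*}
    \Prob{Y \leq n/2 - \Delta} = \Prob{-\sum_{i=1}^n Z_i \geq \Delta}.
\end{equation*}
Next I would apply the exponential Markov inequality: for any parameter $t > 0$,
\begin{equation*}
    \Prob{-\sum_i Z_i \geq \Delta}
    \leq e^{-t\Delta} \Ex{\exp\pth{-t \sum_i Z_i}}
    = e^{-t \Delta} \prod_{i=1}^n \Ex{e^{-t Z_i}},
\end{equation*}
where the factorization uses independence of the $X_i$'s (and hence the $Z_i$'s).

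The next step is to bound the individual moment generating function. A direct computation gives
\begin{equation*}
    \Ex{e^{-t Z_i}} = \tfrac{1}{2}\bigl( e^{t/2} + e^{-t/2} \bigr) = \cosh(t/2).
\end{equation*}
The standard inequality $\cosh(x) \leq e^{x^2/2}$ (verified by comparing Taylor series term by term, since $(2k)! \geq 2^k k!$) yields $\Ex{e^{-t Z_i}} \leq e^{t^2/8}$, so that
\begin{equation*}
    \Prob{Y \leq n/2 - \Delta} \leq \exp\pth{-t \Delta + n t^2 / 8}.
\end{equation*}

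Finally I would optimize over $t > 0$. Differentiating the exponent gives the optimum $t^\star = 4\Delta / n$, which is indeed positive since $\Delta > 0$, and plugging back yields exponent $-2\Delta^2/n$. This gives exactly the claimed bound $\Prob{Y \leq n/2 - \Delta} \leq \exp\pth{-2\Delta^2/n}$. There is no real obstacle here; the only mildly delicate ingredient is the scalar inequality $\cosh(x) \leq e^{x^2/2}$, which is entirely routine.
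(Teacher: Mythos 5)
Your proof is correct: it is the standard exponential-moment (Hoeffding/Chernoff) argument, and every step checks out, including the bound $\cosh(x)\leq e^{x^2/2}$ and the optimization $t^\star = 4\Delta/n$ giving exponent $-2\Delta^2/n$. The paper states this theorem as a standard fact and provides no proof of its own, so there is nothing to compare against; your argument is the canonical one and fills that gap correctly.
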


\begin{lemma}
    \lemlab{chernoff:long:tail}%
    Let $M > 0$ and $n$ be positive integer parameters.  Consider
    performing $u = 2c\ceil{\ln n} + 4M$ independent experiments,
    where each experiments succeeds with probability $\geq 1/2$. Then,
    with probability $\geq 1/n^{c}$, at least $M$ of these experiments
    succeeded.
\end{lemma}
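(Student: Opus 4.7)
The plan is to reduce to the balanced Bernoulli case and then invoke \thmref{Chernoff:0:1} with an appropriate choice of $\Delta$. First, by a standard stochastic-domination/coupling argument, I would replace each experiment by an independent fair coin flip: if $Z_i \in \{0,1\}$ succeeds with probability $p_i \geq 1/2$ and $X_i$ is an independent $\{0,1\}$ coin with $\Prob{X_i=1} = 1/2$, one can couple so that $Z_i \geq X_i$ pointwise. Hence the probability that fewer than $M$ of the $Z_i$ succeed is bounded above by the probability that fewer than $M$ of the $X_i$ equal $1$. So it suffices to prove the statement in the exact $1/2$-success case, which is precisely the setting of \thmref{Chernoff:0:1}.

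Let $Y = \sum_{i=1}^u X_i$. I want to bound $\Prob{Y < M}$, so I set $\Delta = u/2 - M$; then the event $\{Y < M\}$ is contained in $\{Y \leq u/2 - \Delta\}$, and \thmref{Chernoff:0:1} gives $\Prob{Y \leq u/2 - \Delta} \leq \exp(-2\Delta^2/u)$. With $u = 2c\ceil{\ln n} + 4M$, write $a = c\ceil{\ln n}$ and $b = M$, so that $u = 2a + 4b$ and $\Delta = a + b$. The exponent becomes
\begin{equation*}
    \frac{2\Delta^2}{u} = \frac{2(a+b)^2}{2a + 4b} = \frac{(a+b)^2}{a + 2b}.
\end{equation*}

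The key inequality is then $(a+b)^2 \geq a(a+2b)$, which expands to $b^2 \geq 0$ and is trivially true. Hence $2\Delta^2/u \geq a \geq c \ln n$, giving $\Prob{Y < M} \leq \exp(-c \ln n) = n^{-c}$. Therefore at least $M$ of the $u$ experiments succeed with probability $\geq 1 - n^{-c}$ (I am reading the statement as $\geq 1 - 1/n^c$, which seems to be the intended formulation).

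There is no real obstacle here — the only slightly delicate point is the reduction from the ``$\geq 1/2$'' hypothesis to the ``$= 1/2$'' setting required by \thmref{Chernoff:0:1}, which is handled by the coupling above. The rest is an elementary algebraic check that the chosen $u$ makes $\Delta$ large enough for the Chernoff tail to drop below $n^{-c}$.
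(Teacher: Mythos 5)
Your proof is correct and follows essentially the same route as the paper: apply \thmref{Chernoff:0:1} with $\Delta = u/2 - M$ and verify that the resulting exponent is at least $c\ln n$ (and yes, the lemma's ``$\geq 1/n^c$'' is a typo for ``$\geq 1-1/n^c$''). In fact your algebraic step $(a+b)^2/(a+2b)\geq a$ is sharper and cleaner than the paper's, which weakens the denominator to $4(a+b)$ and lands on $\exp(-(a+b)/2)$ --- a bound that is only at most $n^{-c}$ when $M$ is large enough --- and you also make explicit the stochastic-domination step that the paper merely assumes.
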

\begin{proof}
    Let $X_i$ be one of the $i$\th experiment succeeded, and let
    $Y =\sum_i X_i$ -- we assume here that $X_i$ has exactly
    probability $1/2$ to success, and $c>2$ is a prespecified
    constant.  By \thmref{Chernoff:0:1}, we have that the probability
    of failure is
    \begin{align*}
      \Prob{ Y < M }
      &=%
        \Prob{ Y < M }
        =
        \Prob{ Y \leq (c\ceil{\ln n} +2M) - c \ceil{\ln n} - M }
        \leq%
        \exp\pth{ -2 \frac{(c\ceil{\ln n} + M)^2}{ 2c\ceil{\ln n} + 4M}}
      \\&%
      \leq%
      \exp\pth{ -2 \frac{(c\ceil{\ln n} + M)^2}{ 4\pth{c\ceil{\ln n} +
      M}}}
      \leq \frac{1}{n^c}.
    \end{align*}
\end{proof}
}

\end{document}